\documentclass[article]{article}
\usepackage{color}
\usepackage{graphicx}
\usepackage{amsmath}
\usepackage{amssymb}
\usepackage{mathrsfs}
\usepackage[numbers,sort&compress]{natbib}
\usepackage{amsthm}
\numberwithin{equation}{section}
\bibliographystyle{unsrt}

\usepackage{pgf}
\usepackage{CJK}
\usepackage{graphicx}
\usepackage{tikz}
\usepackage{stmaryrd}
\usepackage{graphicx}
\usepackage{hyperref}
\usepackage[latin1]{inputenc}
\usepackage[T1]{fontenc}
\usepackage[english]{babel}
\usepackage{listings}
\usepackage{xcolor,mathrsfs,url}
\usepackage{amssymb}
\usepackage{amsmath}
\usepackage{ifthen}

\newtheorem{proposition}{Proposition}
\usepackage {caption}
\usepackage{float}
\usepackage{subfigure}

\DeclareMathOperator*{\res}{Res}

\textwidth 14cm
\begin{document}

\title{ Riemann-Hilbert approach  to the modified nonlinear Schr\"{o}dinger  equation with non-vanishing asymptotic  boundary conditions }
\author{Yiling YANG$^1$ and Engui FAN$^{1}$\thanks{\ Corresponding author and email address: faneg@fudan.edu.cn } }
\footnotetext[1]{ \  School of Mathematical Sciences, Fudan University, Shanghai 200433, P.R. China.}

\date{ }

\maketitle
\begin{abstract}
	\baselineskip=17pt
	
	The modified nonlinear Schr\"{o}dinger (NLS) equation  was proposed to describe the nonlinear propagation of the Alfven waves and the femtosecond optical pulses in a nonlinear single-mode optical fiber.  In this paper, the inverse scattering transform for the modified  NLS  equation
	with  non-vanishing  asymptotic  boundary at infinity
	is presented. An appropriate two-sheeted Riemann surface is introduced to map the original spectral parameter $k$ into a single-valued parameter $z$.
	The    asymptotic behaviors, analyticity and the symmetries of the Jost solutions of Lax pair for the modified NLS equation, as well as  the spectral  matrix are   analyzed in details.
	Then  a  matrix Riemann-Hilbert (RH) problem associated with the problem of nonzero asymptotic boundary conditions is established, from which
	$N$-soliton solutions is obtained via   the corresponding reconstruction formulae.
	As an illustrate examples of $N$-soliton formula,    two kinds of one-soliton solutions and  three kinds of two-soliton solutions are
	explicitly presented according to different distribution of the spectrum.  The  dynamical feature  of those solutions  are
	characterized   in the particular case with a quartet of discrete eigenvalues.  It is shown  that distribution of the spectrum and non-vanishing   boundary also affect
	feature  of soliton solutions. Finally, we analyze the differences between  our results  and  those    on    zero boundary case.   \\
	{\bf Keywords:}  the modified NLS  equation; Lax pair; inverse scattering transformation;  Riemann-Hilbert problem; $N$-soliton solution.
\end{abstract}

\baselineskip=17pt

\newpage

\section {Introduction}

\quad
\quad The nonlinear  NLS  equation \cite{RN1,RN2}
\begin{equation}
iu_t+u_{xx}+2|u|^2u=0\label{NLS}
\end{equation}
is one of the most celebrated soliton equations, which has applications in a wide variety of fields such as plasma physics, nonlinear optics and many other fields.
To study the effect of higher-order perturbations,
various modifications and generalizations of the NLS equations have been proposed and studied
\cite{RN3,RN4}.    Among them, there are three celebrated  derivative nonlinear Schr\"{o}dinger   equations, including    the Kaup-Newell equation
\cite{Kaup},  the Chen-Lee-Liu  equation \cite{CLL} and the Gerdjikov-Ivanov equation \cite{GI,Fan1, Fan2}.
It is known that these three equations may be
transformed into each other by   implicit  gauge transformations, and the method of gauge transformation
can also be applied to some generalized cases \cite{RN15, Fan3}.
1970s, Wadati et al proposed a   kind of    mixed   NLS   equation
\begin{equation}
u_t-iu_{xx}+a (|u|^2u)_x+i b |u|^2u=0,\label{mixedNLS}
\end{equation}
which was shown to be completely integrable by  inverse scattering transformation  \cite{RN5}.  For $a=0$, the equation (\ref{mixedNLS}) reduces the classical NLS equation (\ref{NLS});
For  $b=0$, the equation (\ref{mixedNLS})  reduces  to  the Kaup-Newell    equation;
For $a, b  \not=0$,   the equation (\ref{mixedNLS}) is equivalent  to   the modified  NLS equation
\begin{equation}
iu_t+u_{xx}+2|u|^2u+i\frac{1}{\alpha}(|u|^2 u)_x=0, \ \ \alpha>0, \label{MNLS}
\end{equation}	
which   was also called the perturbation NLS equation \cite{Maimistov1},  it  can   be used
to describe Alfven waves propagating along the magnetic field
in cold plasmas   and the deep-water gravity waves \cite{Mio1,Stiassnie1}.   The term $i(|u|^2u)_x$ in the  equation (\ref{MNLS})  is called the self-steepening term, which
causes an optical pulse to become asymmetric and steepen upward at the trailing edge
\cite{Al, Yang}.  The equation (\ref{MNLS}) also describes the short pulses propagate in a long optical fiber characterized by a nonlinear refractive index \cite{Nakatsuka1,Tzoar1}.
Brizhik et al  showed  that the modified NLS equation (\ref{MNLS}), unlike the
classical  NLS equation (\ref{NLS}), possesses static localized solutions when the effective nonlinearity parameter is larger than a certain critical value \cite{RN13}.

The modified NLS  equation (\ref{MNLS}) has   been discussed extensively,  for example,
various  solutions such as analytical solutions,  soliton solutions, rational  and multi-rogue wave solutions were found   by   analytical method,  Hirota
bilinear method and   Darboux transformation respectively \cite{RN9, yangxiao, RN59, RN11}.   The Hamiltonian structure  for the equation (\ref{MNLS}) was given \cite{RN10}.
$N$-soliton solutions for  the modified NLS equation (\ref{MNLS}) with zero boundary condition $u(x,t)\rightarrow 0,\ x\to{\pm}\infty$  also were obtained by inverse   scattering  transform (IST)  and dressing method \cite{chen1990,chen1991,Doktorov}.   Deift-Zhou nonlinear  steepest decedent  method  was used to  obtain  long-time  asymptotic solution   of initial problem of   the equation (\ref{MNLS})   \cite{RN12}.   In recent years,     coupled modified  NLS  equations and
vector modified NLS equations  also were   presented and studied  \cite{RN8,NR3,NR4,NR5,NR1}.

Solitons are found in various areas of physics such as   gravitation and field theory, plasma physics,   nonlinear optics and solid state physics, which can be described  by nonlinear equations.
The  IST  procedure,  as one of the  most powerful tool to  investigate solitons of  nonlinear models,   was first discovered  by Gardner,  Green,
Kruskal   and Miura \cite{GGKM1967}.   The  IST  for the focusing NLS equation  with zero boundary conditions   was first developed by Zakharov
and Shabat \cite{Zakharov1972},  later    for the defocusing case with nonzero boundary conditions  \cite{Zakharov1973}.
The next important steps of the development of  IST method   is   the Riemann-Hilbert (RH) method   as the  modern version of IST  was   established  by Zakharov and Shabat   \cite{Zakharov1979},
which involves the determination of a analytic  function in given sectors of the complex plane, from the knowledge of the jumps
of this function across the boundaries of the  sectors.
It has since become clear that the    RH  method  is applicable to  construction  of exact solutions   and  asymptotic  analysis of  solutions
for   a wide class of    integrable systems  \cite{Prinari,F1,F2,F3,F4,F5,F6,ZF2019,ZF20192}.

To our knowledge, with the exception of  IST and dressing method  to the modified  NLS  equation with zero boundary case  \cite{chen1990,chen1991,Doktorov},
there are almost no known results on IST or RH method  for the  modified  NLS  equation with nonzero boundary conditions.  However, in many laboratory and field situations,
the wave motion is initiated by what corresponds to the imposition of boundary conditions.  In addition,  the modulational instability  has received renewed
interest in recent years, and has also been suggested as a possible mechanism for the generation of
rogue waves \cite{cos2006,Zo2009}.  It was shown that rogue wave solutions of integrable systems can be obtained via IST or RH method,  which  provide   an  effective  and perfect  tool   to study  rogue waves and  the nonlinear stage of modulational instability  \cite{F3,dp2019,zc2019}.

In this article,  we consider the modified NLS equation (\ref{MNLS}) with the following nonzero asymptotic  boundary conditions
\begin{equation}
u(x,t) \sim  u_{\pm}e^{-4i\alpha^2 t+2i\alpha x},\hspace{0.5cm}x\to{\pm}\infty,\label{bc1}
\end{equation}
where $ \left|u_{\pm}\right|= u_0 > 0$, and $u_{\pm}$ are independent of $x, t$.
Our aim here is,  by using  Riemann-Hilbert (RH) method,     to   establish
a formulae  of   $N$-soliton solutions   for the   above nonzero boundary   problem   and   characterize  their  features  in the particular case with a quartet of discrete eigenvalues.

The structure of this work is the following.  In  section 2,  we introduce an appropriate two-sheeted Riemann surface is introduced to
map the original spectral parameter $k$ into a single-valued parameter $z$.
In Section 3 and Section 4,  starting from the Lax pair of the  modified NLS equation,  we  construct Jost solutions and spectral matrix.
Then in Section 5,   we   analyze    analytical and  symmetric  properties  the Jost solutions and  spectral matrix.
In section 6,  we  analyze   asymptotic behaviors of the Jost solution, scattering matrix and reflection coefficients.
In Section 7, we discuss the discrete spectrum and the residue conditions to analyze  poles for meromorphic matrices  appearing in  the RH problem.
In Section 8, we  establish reconstruction  formula   between solution of the modified NLS equation  and the RH  problem.
We obtain the trace formula as well as theta condition that reflection coefficients and discrete spectrum satisfy.
In the section 9,   in  reflectionless case,  we discuss  solvability of the RH problem,  from which  $N$-soliton solutions of the modified NLS equation are  obtained.
As an illustrate examples of $N$-soliton formula,   according to different distribution of the spectrum,  two kinds of one-soliton solutions and  three kinds of two-soliton solutions are
explicitly presented,  and their dynamical features   are  characterized     with a quartet of discrete eigenvalues.
The affects   of   distribution of the spectrum  on    soliton solutions are analyzed.

\section{Riemann surface and uniformization variable}

\quad The modified NLS equation (\ref{MNLS}) admits the Lax pair \cite{Doktorov}
\begin{equation}
\phi_x = U\phi,\hspace{0.5cm}\phi_t = V\phi, \label{lax1}
\end{equation}
where
\begin{equation}
U=-\alpha i (k^2-1)\sigma_3+ikQ,\nonumber
\end{equation}
\begin{equation}
V=-2i\alpha^2(k^2-1)^2\sigma_3+2i\alpha k(k^2-1)Q+ik^2Q^2\sigma_3-k\sigma_3Q_x-\frac{i}{\alpha}kQ^3,\nonumber
\end{equation}
and
\begin{equation}
\hspace{0.5cm}\sigma_3=\left(\begin{array}{cc}
1 & 0   \\
0 & -1
\end{array}\right),\hspace{0.5cm}Q=\left(\begin{array}{cc}
0 & u  \\
u^{\ast} &0
\end{array}\right).\nonumber
\end{equation}

To change  (\ref{bc1})  into   constant boundary,  we make a   transformation
\begin{align*}
&u \rightarrow ue^{-4i\alpha^2t+2i\alpha x},\\
&\phi \rightarrow e^{(-2i\alpha^2t+i\alpha x)\sigma_3}\phi,
\end{align*}
then the  modified   NLS equation (\ref{MNLS})   becomes
\begin{equation}
iu_t+u_{xx}+4i\alpha u_x+i\frac{1}{\alpha} (|u|^2 u)_x=0\label{MNLS1}
\end{equation}
with corresponding boundary
\begin{equation}
\lim_{x \to \pm\infty}u(x,t)=u_\pm, \label{boundary}
\end{equation}
where $u_\pm$ are constant independent of $x, t$, and  $|u_\pm|=u_0$.  And the Lax pair (\ref{lax1}), as  the compatibility   of modified   NLS equation  (\ref{MNLS1}),
is changed to
\begin{equation}
\phi_x=X\phi,\hspace{0.5cm}\phi_t=T\phi,\label{lax2}
\end{equation}
where
\begin{align*}
X=-i\alpha k^2\sigma_3+ikQ,\hspace{0.5cm}Q=\left(\begin{array}{cc}
0 & u \\
u^* & 0
\end{array}\right),
\end{align*}
\begin{align*}
T=(-2i\alpha^2 k^4+4i\alpha^2 k^2)\sigma_3+ik^2|u|^2\sigma_3+2i\alpha k(k^2-2)Q-\frac{i}{\alpha}kQ^3+kQ_x.
\end{align*}

Under the nonzero asymptotic  boundary  condition  (\ref{boundary}),  limit  spectral problem of  the Lax pair (\ref{lax2}) is
\begin{equation}
\psi_x=X_\pm\psi,\hspace{0.5cm}\psi_t=T_\pm\psi,\label{asymptoticlax}
\end{equation}
where
\begin{equation}
X_\pm=-i\alpha k^2\sigma_3+ikQ_\pm,\hspace{0.5cm}T_\pm=\left(2\alpha k^2-4\alpha-\frac{u_0^2}{\alpha}\right)X_\pm,
\end{equation}
and
\begin{equation*}
Q_\pm=\left(\begin{array}{cc}
0 & u_\pm \\
u_\pm^* & 0
\end{array}\right).
\end{equation*}

The eigenvalues of the matrix $X_\pm$ are $\pm ik\lambda$,
where $\lambda$ satisfies
\begin{equation}
\lambda^2=\alpha^2k^2+u_0^2,
\end{equation}
which  are doubly branched,  and its  branch points are  $k=\pm iu_0/\alpha$.
 Gluing  two copies of the complex plane $S_1$ and $S_2$ along the segment $[-\frac{i}{\alpha}u_0,\frac{i}{\alpha}u_0]$, we then obtain a  Riemann surface.
By setting
$$ k\alpha+iu_0=r_1e^{i\theta_1}, \ \   k\alpha-iu_0=r_2e^{i\theta_2},\ \ -\pi/2 < \theta_j <3/2\pi, \  j=1,2, $$
 we  get   two single-valued analytic functions on the Riemann surface
\begin{equation}
\lambda(k)=\Bigg\{\begin{array}{ll}
\text{$\frac{1}{\alpha}(r_1r_2)^{1/2}e^{i(\theta_1+\theta_2)/2},$} &\text{on $S_1,$}\\\\
\text{$-\frac{1}{\alpha}(r_1r_2)^{1/2}e^{i(\theta_1+\theta_2)/2},$} &\text{on $S_2$}. \end{array}
\label{tranlambda}
\end{equation}

To avoid multi-valued case  of   eigenvalue  $\lambda$,   we  introduce    a uniformization variable
\begin{equation}
z=\alpha k+\lambda,
\end{equation}
and  obtain two single-valued functions
\begin{equation}
k(z)=\frac{1}{2\alpha}(z-\frac{u_0^2}{z}),\hspace{0.5cm}\lambda(z)=\frac{1}{2}(z+\frac{u_0^2}{z}),\label{uniformization55}
\end{equation}
which  allow us  to discuss the IST  on a standard $z$-plane instead of the more cumbersome two-sheeted Riemann surface.  The  second transformation of  (\ref{uniformization55})
is   well-known   Joukowsky transformation.  However,
 the  limit    $k\rightarrow \infty$  on   two-sheeted Riemann surface will  cause   two limits $z$-plane:   $z \rightarrow 0$ and
$ z \rightarrow \infty$.  In fact, for $k\in S_1$, we have
\begin{align}
z&= \alpha k+\sqrt{\alpha^2k^2+u_0^2}=\alpha k+\alpha k\left(1+ \frac{u_0^2}{\alpha^2k^2} \right)^{1/2}\nonumber\\
&=2\alpha k+O(k^{-1}) \rightarrow \infty,\hspace{0.5cm} \ k\rightarrow \infty.\nonumber
\end{align}
While for $k\in S_{2}$,  we find that
\begin{align}
& z=\alpha k-\sqrt{\alpha^2k^2+q_0^2}=\frac{-u_0^2}{\alpha k+\sqrt{\alpha^2k^2+u_0^2}} \rightarrow 0, \hspace{0.5cm}\ k\rightarrow \infty.\nonumber
\end{align}
 Therefore we should  consider  two kinds of asymptotic behaviors  of Jost solution,  scattering data and Riemann-Hilbert problem   on the $z$-plane  hereafter.

In addition to, we should explain  how the  two-sheeted Riemann surface is mapped  into  the  $z$-plane.
 The transformation (\ref{tranlambda}) possesses the following properties
\begin{itemize}
\item[$\blacktriangleright$]  Map ${\rm Im}k>0$ of  $S_1$ and  ${\rm Im}k<0$  of   $S_2$  together   into  the  ${\rm Im}\lambda>0$ of $\lambda$-plane;

\item[$\blacktriangleright$]  Map ${\rm Im}k<0$ of  $S_1$ and ${\rm Im}k>0$ of  $S_2$  together   into  the ${\rm Im}\lambda<0$ of $\lambda$-plane;

\item[$\blacktriangleright$]   Map the   segment $[-\frac{i}{\alpha}u_0,\frac{i}{\alpha}u_0]$       into  a  $[-u_0, u_0]$ on $\lambda$-plane.
\end{itemize}

Let's consider map from the $\lambda$-plane to the $z$-plane again.   By using  the  relation
\begin{align}
& \lambda(z)=\frac{z^2+u_0^2 }{2z}=\frac{(|z|^2-u_0^2)z+u_0^2(z+\bar{z}) }{2|z|^2} \nonumber\\
&= \frac{1}{2|z|^2} \left[(|z|^2-u_0^2)z+2u_0^2{\rm Re}z \right], \nonumber
\end{align}
we have
\begin{align}
& {\rm Im} \lambda(z)=  \frac{1}{2|z|^2}  (|z|^2-u_0^2)  {\rm Im} z,
\end{align}
which implies that  the  Joukowsky transformation admits the following properties
\begin{itemize}
\item[$\blacktriangleright$]   Map    ${\rm Im}\lambda>0$  into the domain $\{z\in C: (|z|^2-u_0^2){\rm Im} z>0\}$ in $z$-plane;

\item[$\blacktriangleright$]   Map  ${\rm Im}\lambda<0$  into the domain $\{z\in C: (|z|^2-u_0^2){\rm Im} z<0\}$ in $z$-plane;

\item[$\blacktriangleright$]  Map   $[-u_0, u_0]$  into circle  $C_0=\{|z|=u_0, \  z\in C\} $  in  $z$-plane.

\end{itemize}
Transformation relations  from $k$ two-sheeted Riemann surface,  $\lambda$-plane and  $z$-plane are shown in  Figure   \ref{klabdaz}.
It will be seen later that the circle $C_0$ is not   the boundary of analytical domains for  the  Jost  solutions and  scattering data,
 but  it  will  affect  their symmetry.  These are very   different from  classical focusing  NLS equation with nonzero boundary conditions,
  where  the boundary of analytical domains  is   $\mathbb{R}\cup C_0 $   \cite{F3}, but ours is $ \mathbb{R} \cup i\mathbb{R}$.

\begin{figure}
\begin{center}
\begin{tikzpicture}
\draw [fill=yellow,ultra thick,yellow] (-3.8,0) rectangle (-0.8,1);
\draw [fill=cyan,ultra thick,cyan] (-3.8,0) rectangle (-0.8,-1);
\draw [->](-4,0)--(-0.6,0);
\draw [->](-2.2,-1.5)--(-2.2,1.5);
\draw [ultra thick,red](-2.2,-0.6)--(-2.2,0.6);
\node at (-2.6, -0.6 )  {$-\frac{iu_0}{\alpha} $};
\node at (-2.5, 0.6 )  {$\frac{iu_0}{\alpha} $};
\node at (-3.4, 0.7 )  {$  S_1$};
\node [thick] [above]  at (-1.5, 0.3){\footnotesize ${\rm Im}k>0$};
\node [thick] [above]  at (-1.5, -0.8){\footnotesize ${\rm Im}k<0$};
\node [thick] [above]  at (-2.2,1.5){\footnotesize ${\rm Im}k$};

\draw [fill=yellow,ultra thick,yellow] (0.6,0) rectangle (3.5,-1);
\draw [fill=cyan,ultra thick,cyan] (0.6,0) rectangle (3.5,1);
\draw [-> ] (0.5,0)--(3.7,0);
\draw [->](2,-1.5)--(2,1.5);
\draw [ultra thick,red](2,-0.6)--(2,0.6);
\node at (1.6, -0.6 )  {$-\frac{iu_0}{\alpha} $};
\node at (1.7,  0.6 )    {$\frac{iu_0}{\alpha}$};
\node [thick] [above]  at (2.8, -0.8){\footnotesize ${\rm Im}k<0$};
\node [thick] [above]  at (2.8, 0.3){\footnotesize ${\rm Im}k>0$};
\node [thick] [above]  at (1, 0.5){\footnotesize $S_2$};
\node [thick, red] [above]  at (0,-0.2) {\footnotesize {\bf +}};
\draw [thick, red,->] (0,-0.7)--(-1.5,-2 );
\draw [thick, red,->] (0,-0.7)--(1.5,-2 );
\node [thick] [above]  at (-1.7,-2.1) {\footnotesize $\lambda=\sqrt{k^2+u_0^2}$};
\node [thick] [above]  at (2.2,-2.1)  {\footnotesize $k(z)=\frac{1}{2\alpha}(z-u_0^2/z)$};
\node [thick] [above]  at (2,1.5){\footnotesize ${\rm Im}k$};

\draw [fill=yellow,ultra thick,yellow] (-3.8,-4) rectangle (-0.8,-3 );
\draw [fill=cyan,ultra thick,cyan] (-3.8,-4) rectangle (-0.8,-5 );
\draw [->](-4,-4)--(-0.6,-4);
\draw [->](-2.2,-5.2)--(-2.2,-2.5);
\draw [ultra thick,red](-2.8,-4)--(-1.6,-4);
\node at (-2.8, -4.2  )  {$-u_0 $};
\node at (-1.6, -4.2 )  {$u_0 $};
\node [thick] [above]  at (-0.3,-4.2){\footnotesize ${\rm Re}\lambda$};
\node [thick] [above]  at (-2.2,-2.6){\footnotesize ${\rm Im}\lambda$};
\node [thick] [above]  at (-1.5,-3.8){\footnotesize ${\rm Im}\lambda>0$};
\node [thick] [above]  at (-1.5,-4.8){\footnotesize ${\rm Im}\lambda<0$};
\draw [fill=cyan,ultra thick,cyan] (0.8, -4) rectangle (3.5, -5 );
\path [fill=yellow] (1.4,-4) to  [out=-90, in=180]  (2,-4.6) to [out=0, in=-90]  (2.6,-4);
\draw [fill=yellow,ultra thick,yellow] (0.8, -4) rectangle (3.5, -3 );
\path [fill=cyan]  (1.4,-4) to  [out=90, in=180]  (2,-3.4) to [out=0, in=90]  (2.6,-4);

\draw [ultra thick,red] (2,-4) circle [radius=0.6];

\node [thick] [above]  at (4,-4.2){\footnotesize ${\rm Re}z$};
\node [thick] [above]  at (2,-2.6){\footnotesize ${\rm Im}z$};
\draw [thick, red,->] (-1,-4.8)--(1,-4.8 );
\node [thick] [above]  at (-0,-5.5) {\footnotesize $\lambda=(z+u_0^2/z)/2$};
\draw [->](0.5,-4)--(3.7,-4);
\draw [->](2,-5.2)--(2,-2.5);
\end{tikzpicture}
\end{center}
\caption{ Transformation relation from  $k$ two-sheeted Riemann surface,  $\lambda$-plane and  $z$-plane}
\label{klabdaz}
 \end{figure}
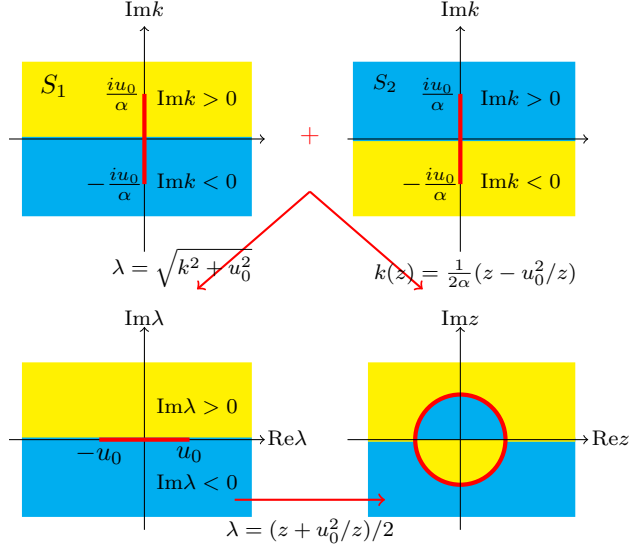

By using the relation
\begin{align*}
{\rm Im}(k(z)\lambda(z))&= {\rm Im}\frac{z^4-u_0^4}{4\alpha z^2}= {\rm Im}\frac{(|z|^4+q_0^4)z^2-2q_0^4(({\rm Re}z)^2-({\rm Im}z)^2)}{4\alpha|z|^4}\\
&= \frac{1}{4\alpha|z|^4}(|z|^4+u_0^4){\rm Im}z^2= \frac{1}{2\alpha|z|^4}(|z|^4+u_0^4){\rm Re}z{\rm Im}z,
\end{align*}
we define two domains $D^+$, $D^-$ and their boundary   $\Sigma$ on $z$-plane by
\begin{align*}
&D^-=\{z:{\rm Re}z{\rm Im}z>0\},\hspace{0.5cm}D^+=\{z:{\rm Re}z{\rm Im}z<0\},\\
&  \Sigma=\{z:{\rm Re}z{\rm Im}z=0\}=  \mathbb{R} \cup i\mathbb{R}\backslash \{0\},
\end{align*}
which   are shown in Figure \ref{fig:figure2}.
\begin{figure}[H]
	\centering
	\includegraphics[width=0.42\linewidth]{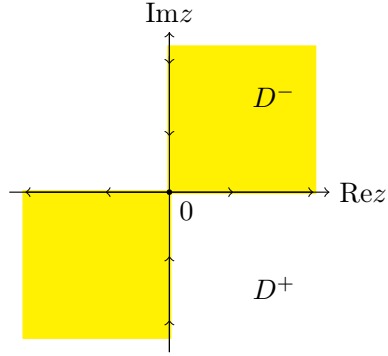}
	\caption{ The domains  $D^-$, $D^+$  and boundary  $\Sigma$.}
	\label{fig:figure2}
\end{figure}

\section{Jost Solutions      }

\quad  From eigenvalues $\pm ik\lambda$, we can get  the   eigenvector matrix of  $X_\pm$ and $T_\pm$ as
\begin{equation}
Y_\pm=\left(\begin{array}{cc}
1 & \frac{u_\pm}{z} \\
-\frac{u_\pm^*}{z} & 1
\end{array}\right) =I+\frac{1}{z}\sigma_3Q_\pm,
\end{equation}
by which   $X_\pm$ and $T_\pm$ are  diagonalized  simultaneously
\begin{equation}
X_\pm=Y_\pm(-ik\lambda\sigma_3)Y_\pm^{-1},\hspace{0.5cm}T_\pm=Y_\pm\left[-ik\lambda\sigma_3(2\alpha k^2-4\alpha-\frac{u_0^2}{\alpha})\right]Y_\pm^{-1}.\label{lax3}
\end{equation}
Direct computation shows  that
\begin{equation}
\det (Y_\pm)=1+\frac{u_0^2}{z^2}\triangleq\gamma,
\end{equation}
and
\begin{equation}
Y_\pm^{-1}=\frac{1}{\gamma}\left(\begin{array}{cc}
1 & -\frac{u_\pm}{z} \\
\frac{u_\pm^*}{z} & 1
\end{array}\right)=\frac{1}{\gamma}(I-\frac{1}{z}\sigma_3Q_\pm).
\end{equation}

Substituting (\ref{lax3}) into (\ref{asymptoticlax}), we immediately obtain
\begin{equation}
(Y_\pm^{-1}\psi)_x=-ik\lambda\sigma_3(Y_\pm^{-1}\psi),\hspace{0.5cm}(Y_\pm^{-1}\psi)_t=-ik\lambda\left(2\alpha k^2-4\alpha-\frac{u_0^2}{\alpha}\right)\sigma_3(Y_\pm^{-1}\psi),
\end{equation}
from which we can derive the solution of the asymptotic spectral problem (\ref{asymptoticlax})
\begin{equation}
\psi_\pm=Y_\pm e^{i\theta(z)\sigma_3},
\end{equation}
where
\begin{equation*}
\theta(z)=-k(z)\lambda(z)[x+(2\alpha k^2(z)-4\alpha-\frac{u_0^2}{\alpha})t].
\end{equation*}
It follows that  the Jost  solutions  $\phi_\pm(x,t,z)$  of the Lax pair (\ref{lax2}) possess the following
asymptotics
\begin{equation}
\phi_\pm \sim Y_\pm e^{i\theta(z)\sigma_3},\hspace{0.5cm}x \rightarrow \pm\infty. \label{asympttu}
\end{equation}

By making transformation
\begin{equation}
\varphi_\pm=\phi_\pm e^{-i\theta(z)\sigma_3},\label{trans2}
\end{equation}
we then  have
\begin{equation*}
\varphi_\pm \sim Y_\pm, \hspace{0.5cm} x \rightarrow \pm\infty.
\end{equation*}
Moreover,  $\varphi_\pm$    satisfy an  equivalent Lax pair
\begin{align}
&(Y_\pm^{-1}\varphi_\pm)_x=ik\lambda[Y_\pm^{-1}\varphi_\pm,\sigma_3]+Y_\pm^{-1}\Delta X_\pm\varphi_\pm, \label{eqlax1}\\
&(Y_\pm^{-1}\varphi_\pm)_t=ik\lambda(2\alpha k^2-4\alpha-\dfrac{1}{\alpha}u_0^2)[Y_\pm^{-1}\varphi_\pm,\sigma_3]+Y_\pm^{-1}\Delta T_\pm\varphi_\pm,\label{eqlax2}
\end{align}
where $\Delta X_\pm=X-X_\pm=ik(Q-Q_\pm)$ and $\Delta T_\pm=T-T_\pm$.  Above  two equations (\ref{eqlax1}) and (\ref{eqlax2}) can be written in full derivative form
\begin{equation}
d(e^{-i\theta(z){\hat{\sigma}}_3}Y_\pm^{-1}\varphi_\pm) = e^{-i\theta(z){\hat{\sigma}}_3}[Y_\pm^{-1}(\Delta X_\pm dx+\Delta T_\pm dt)\varphi_\pm],
\end{equation}
which lead to two  Volterra integral equations
\begin{align}
&\varphi_-(x,t,z)=Y_-+\int_{-\infty}^{x}Y_-e^{-ik\lambda(x-y){\hat{\sigma}}_3}[Y_-^{-1}\Delta X_-\varphi_-(y,t,z)]dy,\label{jost1}\\
&\varphi_+(x,t,z)=Y_+-\int_{x}^{\infty}Y_+e^{-ik\lambda(x-y){\hat{\sigma}}_3}[Y_+^{-1}\Delta X_+\varphi_+(y,t,z)]dy,\label{jost2}
\end{align}
where $z\neq iu_0$.

We   define $\varphi_\pm=(\varphi_{\pm,1},\varphi_{\pm,2})$ with   $\varphi_{\pm,1}$ and $ \varphi_{\pm,2}$ denoting the first and second   column of $\varphi_\pm$
respectively, then the first column of   the equation (\ref{jost1}) can be written as
\begin{equation}
Y_-^{-1}\varphi_{-,1}=\left(\begin{array}{c}
1\\
0
\end{array}\right)+\int_{-\infty}^xG(x-y,z)\Delta X_-\varphi_{-,1}dy,
\end{equation}
where
\begin{equation}
G(x-y,z)=\frac{1}{\gamma}\left(\begin{array}{cc}
1 & -\frac{u_\pm}{z}\\
\frac{u_\pm^*}{z}e^{2ik\lambda(x-y)} & e^{2ik\lambda(x-y)}
\end{array}\right).
\end{equation}
Note that $e^{2ik\lambda(x-y)}=e^{2i(x-y)\text{Re}(k\lambda)}e^{2(x-y)\text{Im}(k\lambda)}$ and   $x-y>0$,   we demonstrate that the first column of $\varphi_-$ is analytical  on $D_-$, denoted by $\varphi_{-,1}^-$.
The same argument shows  that   the second column of $\varphi_-$  is  analytical on  $D_+$, and denoted by  $\varphi_{-,2}^+$.
And    $\varphi_+=(\varphi_{+,1}^+,\varphi_{+,2}^-)$  which  denote the first and second column are analytical on the $D^+$ and $D^-$ respectively.

\section{Scattering Matrix}

\quad  Since ${\rm tr}X={\rm tr}T=0$  in  (\ref{lax2}),   then by using Able formula, we  have
\begin{equation}
(\det\phi_\pm)_x=(\det\phi_\pm)_t=0. \label{abl}
\end{equation}
Again by using the relation
\begin{equation*}
 \det(\varphi_\pm)=\det(\phi_\pm e^{-i\theta(z)\sigma_3})=\det(\phi_\pm),
\end{equation*}
we get  $(\det\varphi_\pm)_x=(\det\varphi_\pm)_t=0$, which means  that  $\det(\varphi_\pm)$ is independent of $x,t$.  So we obtain that
\begin{equation}
\det\varphi_\pm=\lim_{x \to \pm\infty}\det(\varphi_\pm)=\det Y_\pm=\gamma\neq0,  \hspace{0.5cm}z\in D^+\cup D^-,\label{detphi}
\end{equation}
which implies that $\varphi_\pm$ are inverse matrices.

Since   $\phi_\pm$ are two fundamental matrix solutions  of the   Lax  pair (\ref{lax2}),  there exists a linear  relation between $\phi_+$ and $\phi_-$, namely
\begin{equation}
\phi_+(x,t,z)=\phi_-(x,t,z)S(z), \label{scattering}
\end{equation}
where $S(z)$ is called scattering matrix and (\ref{detphi}) implies that $\det S(z)=1$.

 Denoting the scattering matrix by   $S(z)=(s_{ij}(z))_{2\times 2} $,  then   individual columns of  the matrix equation (\ref{scattering})
are
\begin{align}
&\phi_{+,1}=s_{11}(z)\phi_{-,1}+s_{21}(z)\phi_{-,2},\hspace{0.5cm}\phi_{+,2}=s_{12}(z)\phi_{-,1}+s_{22}(z)\phi_{-,2},\label{idvsca1}\\
&\varphi_{+,1}=s_{11}(z)\varphi_{-,1}+s_{21}(z)e^{-2i\theta}\varphi_{-,2},\hspace{0.5cm}\varphi_{+,2}=s_{12}(z)e^{2i\theta}\varphi_{-,1}+s_{22}(z)\varphi_{-,2},\label{idvsca2}
\end{align}
in which  $s_{ij}(z), \ i, j=1,2$ are called scattering data, and   the reflection coefficients are defined by
\begin{equation}
\rho(z)=\frac{s_{21}(z)}{s_{11}(z)},\hspace{0.5cm}\tilde{\rho}(z)=\frac{s_{12}(z)}{s_{22}(z)}.
\end{equation}

Solving above  four  linear systems  (\ref{idvsca1})   and  (\ref{idvsca2}), we  find that
\begin{align}
&s_{11}(z)=\frac{{\rm Wr}(\phi_{+,1},\phi_{-,2})}{\gamma}=\frac{{\rm Wr}(\varphi_{+,1}\varphi_{-,2})}{\gamma},\label{scatteringcoefficient1}\\
&s_{12}(z)=\frac{{\rm Wr}(\phi_{+,2},\phi_{-,2})}{\gamma}=\frac{{\rm Wr}(\varphi_{+,2},\varphi_{-,2})}{e^{2i\theta}\gamma},\\
&s_{21}(z)=\frac{{\rm Wr}(\phi_{-,1},\phi_{+,1})}{\gamma}=\frac{{\rm Wr}(\varphi_{-,1},\varphi_{+,1})}{e^{-2i\theta}\gamma},\\
&s_{22}(z)=\frac{{\rm Wr}(\phi_{-,1},\phi_{+,2})}{\gamma}=\frac{{\rm Wr}(\varphi_{-,1},\varphi_{+,2})}{\gamma},\label{scatteringcoefficient2}
\end{align}
 which together with analyticity of  $\varphi_\pm$   show  that scattering data $s_{11}(z)$ is analytic in $D^+$,  $s_{22}(z)$ is analytic in $D^-$, and $s_{12}(z)$, $s_{21}(z)$ are continuous to $\Sigma$.

\section{Symmetry of $\varphi_\pm$ and $ S(z)$}

\begin{proposition}
	For $z\in D^+$, the Jost solution, scattering matrix and reflection coefficients admit  the following  two kinds of  symmetries
	
	$\bullet$ The first symmetry reduction
	\begin{equation}
	\varphi_\pm(x,t,z)=-\sigma_*\varphi_\pm^*(x,t,z^*)\sigma_*,\hspace{0.5cm}S(z)=-\sigma_*S^*(z^*)\sigma_*,\hspace{0.5cm}\rho(z)=-\tilde{\rho}^*(z^*),
	\label{symmetry1}
	\end{equation}
	\begin{equation}
	\varphi_\pm(x,t,z)=\sigma_1\varphi_\pm^*(x,t,-z^*)\sigma_1,\hspace{0.5cm}S(z)=\sigma_1S^*(-z^*)\sigma_1,\hspace{0.5cm}\rho(z)=\tilde{\rho}^*(-z^*),\label{symmetry2}
	\end{equation}
	where $\sigma_*=\left(\begin{array}{cc}
	0 & 1 \\
	-1 & 0
	\end{array}\right),\hspace{0.5cm}\sigma_1=\left(\begin{array}{cc}
	0 & 1 \\
	1 & 0
	\end{array}\right).$

$\bullet$ The second symmetry reduction
\begin{align}
&\varphi_\pm(x,t,z)=\frac{1}{z}\varphi_\pm\left(x,t,-\frac{u_0^2}{z}\right)\sigma_3Q_\pm\notag,\\
&S(z)=(\sigma_3Q_-)^{-1}S\left(-\frac{u_0^2}{z}\right)\sigma_3Q_+,\hspace{0.5cm}\rho(z)=\frac{u_-}{u_-^*}\tilde{\rho}\left(-\frac{u_0^2}{z}\right).\label{symmetry3}
\end{align}
\end{proposition}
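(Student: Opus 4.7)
The plan for all three symmetries follows a single scheme: for each spectral transformation $\tau\colon z\mapsto z^{\ast},\ -z^{\ast},\ -u_0^2/z$, I will exhibit a constant matrix $M$ and, if necessary, a prefactor $c(z)$, such that $c(z)\,M\,\phi^{\#}(x,t,\tau(z))\,M^{-1}$ (where $\phi^{\#}$ is $\phi$ or its entry-wise conjugate $\phi^{\ast}$) is again a solution of the Lax pair \eqref{lax2} at the original parameter $z$. Applied to the Jost solutions $\phi_\pm$, the construction produces another Lax-pair solution whose asymptotic $Y_\pm(z)\,e^{i\theta(z)\sigma_3}$ at $x\to\pm\infty$ matches that of $\phi_\pm(x,t,z)$. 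Uniqueness of the Volterra integral equations \eqref{jost1}--\eqref{jost2} then forces equality, and the symmetries of $\varphi_\pm$ follow via the gauge \eqref{trans2}. Substituting into $\phi_+=\phi_-S(z)$ from \eqref{scattering} propagates the identity to $S(z)$, and the reflection-coefficient symmetries follow at once from the definitions of $\rho,\tilde{\rho}$.

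For the conjugation symmetries \eqref{symmetry1}--\eqref{symmetry2} I first record $k(z^{\ast})=k^{\ast}(z)$, $\lambda(z^{\ast})=\lambda^{\ast}(z)$, $k(-z^{\ast})=-k^{\ast}(z)$, $\lambda(-z^{\ast})=-\lambda^{\ast}(z)$, so that $X^{\ast}(z^{\ast})=i\alpha k^{2}\sigma_3-ikQ^{\ast}$ and $X^{\ast}(-z^{\ast})=i\alpha k^{2}\sigma_3+ikQ^{\ast}$. The algebraic identities
\[
\sigma_{\ast}\sigma_3\sigma_{\ast}^{-1}=-\sigma_3,\ \sigma_{\ast}Q\sigma_{\ast}^{-1}=-Q^{\ast},\qquad \sigma_1\sigma_3\sigma_1=-\sigma_3,\ \sigma_1 Q\sigma_1=Q^{\ast},
\]
yield $\sigma_{\ast}X(z)\sigma_{\ast}^{-1}=X^{\ast}(z^{\ast})$ and $\sigma_1 X(z)\sigma_1^{-1}=X^{\ast}(-z^{\ast})$ (and likewise for $T$), so that $-\sigma_{\ast}\phi_\pm^{\ast}(x,t,z^{\ast})\sigma_{\ast}$ and $\sigma_1\phi_\pm^{\ast}(x,t,-z^{\ast})\sigma_1$ both solve \eqref{lax2} at parameter $z$. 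A direct computation using the form of $Y_\pm$ gives $-\sigma_{\ast}Y_\pm^{\ast}(z^{\ast})\sigma_{\ast}=\sigma_1 Y_\pm^{\ast}(-z^{\ast})\sigma_1=Y_\pm(z)$, and $\theta^{\ast}(z^{\ast})=\theta^{\ast}(-z^{\ast})=\theta(z)$, so the asymptotic boundary conditions coincide and uniqueness delivers \eqref{symmetry1}--\eqref{symmetry2}.

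The Joukowsky symmetry \eqref{symmetry3} is where I expect the main bookkeeping to be delicate. The map $z\mapsto -u_0^2/z$ leaves $k$ invariant but sends $\lambda\mapsto -\lambda$; since $X,T$ depend on $z$ only through $k$, both Lax matrices are literally invariant, so $\phi_\pm(x,t,-u_0^2/z)\,M$ is a solution of \eqref{lax2} at parameter $z$ for any constant matrix $M$. I will choose $M=\sigma_3 Q_\pm/z$, and the key algebraic input is the quadratic identity $(\sigma_3 Q_\pm)^{2}=-u_0^2 I$ (which uses $|u_\pm|=u_0$), from which
\[
\frac{1}{z}Y_\pm\!\left(-\tfrac{u_0^2}{z}\right)\sigma_3 Q_\pm=\frac{1}{z}\!\left(I-\tfrac{z}{u_0^2}\sigma_3 Q_\pm\right)\sigma_3 Q_\pm=\frac{1}{z}\sigma_3 Q_\pm+I=Y_\pm(z).
\]
Combined with $\theta(-u_0^2/z)=-\theta(z)$ and the diagonal-times-anti-diagonal commutation $e^{-i\theta\sigma_3}(\sigma_3 Q_\pm)=(\sigma_3 Q_\pm)e^{i\theta\sigma_3}$, the candidate again has asymptotic $Y_\pm(z)e^{i\theta(z)\sigma_3}$, giving the first identity in \eqref{symmetry3}. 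Substituting into $\phi_+=\phi_-S(z)$, applying the same identity to $\phi_-$, and simplifying with $(\sigma_3 Q_-)^{-1}=-\sigma_3 Q_-/u_0^2$ produces the scattering-matrix formula; the reflection-coefficient relation then follows by taking the appropriate ratio of entries.

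The only genuinely delicate moment anywhere in the argument is this last reduction, where the sign flip $\lambda\mapsto -\lambda$, the quadratic relation $(\sigma_3 Q_\pm)^{2}=-u_0^2 I$, and the exponential anti-commutation must combine coherently in a single asymptotic comparison; the conjugation symmetries are routine once the intertwining relations for $\sigma_{\ast}$ and $\sigma_1$ are in hand, and the $S(z)$, $\rho$, $\tilde{\rho}$ identities are then immediate corollaries.
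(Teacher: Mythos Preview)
Your proposal is correct and follows essentially the same strategy as the paper: exhibit an intertwiner so that the transformed Jost solution still solves the Lax pair, match the asymptotics $Y_\pm(z)e^{i\theta(z)\sigma_3}$, and invoke uniqueness; the $S(z)$ and $\rho,\tilde\rho$ identities then drop out of \eqref{scattering}. The only organizational difference is that the paper checks the conjugation symmetries \eqref{symmetry1}--\eqref{symmetry2} at the level of $\varphi_\pm$ and the modified Lax pair \eqref{eqlax1}, whereas you work uniformly with $\phi_\pm$ and the original pair \eqref{lax2}; your version is slightly cleaner and makes the key quadratic identity $(\sigma_3 Q_\pm)^2=-u_0^2 I$ in the Joukowsky step explicit, which the paper leaves implicit.
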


\begin{proof}

We  just need show that   $-\sigma_*\varphi_\pm^*(x,t,z^*)\sigma_*$  is   also  a    solution of (\ref{eqlax1}) and admits  the same   asymptotic behavior
like the Jost solutions  $\varphi_\pm(x,t,z)$.

 By using (\ref{symmetry1}), we  calculate $(-Y_\pm^{-1}\sigma_*\varphi_\pm^*(x,t,z^*)\sigma_*)_x$ and obtain
	\begin{equation}
	\begin{split}
	(-Y_\pm^{-1}\sigma_*\varphi_\pm^*(x,t,z^*)\sigma_*)_x=-Y_\pm^{-1}\sigma_*Y_\pm[-ik(z^*)^*\lambda^*(z^*)[(Y_\pm^{-1})^*&(x,t,z^*)\varphi_\pm(x,t,z^*),\sigma_3]\\+(Y_\pm^{-1})^*(x,t,z^*)(\Delta X_\pm)^*\varphi_\pm^*(x,t,z^*)].
	\end{split}
	\end{equation}
Noticing that
	\begin{align*}
	Y_\pm^{-1}(x,t,z)=-\sigma_*(Y_\pm^{-1})^*(x,t,z^*)\sigma_*,\hspace{0.5cm} \sigma_*\sigma_3\sigma_*=\sigma_3,\hspace{0.5cm}\sigma_*(\Delta X_\pm(z^*))^*\sigma_*=\Delta X_\pm(z),
	\end{align*}
then we have
	\begin{equation}
\begin{split}
(-Y_\pm^{-1}(z)\sigma_*\varphi_\pm^*(x,t,z^*)\sigma_*)_x=-ik(z)\lambda(z)[Y_\pm^{-1}(z)(-\sigma_*\varphi_\pm(x,t,z^*)&\sigma_*),\sigma_3]\\+Y_\pm^{-1}(z)\Delta X_\pm(-\sigma_*\varphi_\pm(x,t,z^*)\sigma_*),
\end{split}
\end{equation}
which implies that  $-\sigma_*\varphi_\pm^*(x,t,z^*)\sigma_*$ is a solution of (\ref{eqlax1}) with  asymptotic behaviors
\begin{equation*}
 -\sigma_*\varphi_\pm^*(x,t,z^*)\sigma_* \sim Y_\pm, \hspace{0.5cm} x \rightarrow \pm\infty.
\end{equation*}
From the uniqueness of the solution   for the   equation (\ref{eqlax1}), we have
\begin{equation}
	\varphi_\pm(x,t,z)=-\sigma_*\varphi_\pm^*(x,t,z^*)\sigma_*.
\end{equation}

Similarly, by using  the symmetry relations
\begin{align*}
&Y_\pm^{-1}(x,t,z)=\sigma_1(Y_\pm^{-1})^*(x,t,-z^*)\sigma_1,\hspace{0.5cm}\sigma_1\sigma_3\sigma_1=-\sigma_3,\hspace{0.5cm}\\
&\theta^*(-z^*)=\theta(z),\hspace{0.5cm}\sigma_1\sigma_3\sigma_1=-\sigma_3,\hspace{0.5cm}
\end{align*}
it is easy to derive the following  symmetry
$$	\varphi_\pm(x,t,z)=\sigma_1\varphi_\pm^*(x,t,-z^*)\sigma_1,\hspace{0.5cm}S(z)=\sigma_1S^*(-z^*)\sigma_1,\hspace{0.5cm}\rho(z)=\tilde{\rho}^*(-z^*).$$

Next, we prove  the second kind of   symmetries (\ref{symmetry3}). From the relations
\begin{equation*}
k\left(-\frac{u_0^2}{z}\right)=k(z),\hspace{0.5cm}\lambda\left(-\frac{u_0^2}{z}\right)=\lambda(z), \ \ \theta\left(-\frac{u_0^2}{z}\right)=-\theta(z),
\end{equation*}
we know that if    $\phi(x,t,z)$ is a solution of the scattering problem (\ref{lax2}),
  then   $\phi(x,t,-\frac{q_0^2}{z})C$  is  also the solution of (\ref{lax2}), where $C$ is  a   determined  $2\times2$ matrix  independent of  $x$  and $t$.
To obtain symmetric relation $\phi(x,t,z)=\phi(x,t,-\frac{q_0^2}{z})C$, we   require $\phi(x,t,-\frac{q_0^2}{z})C$   has the same asymptotic condition (\ref{asympttu})  with  $\phi(x,t,z)$, that is,
\begin{equation}
\phi_\pm\left(x,t,-\frac{u_0^2}{z}\right)C\sim Y_\pm\left(-\frac{u_0^2}{z}\right)e^{-i\theta(z)\sigma_3}C= Y_\pm(z)e^{i\theta\sigma_3}, \ x\rightarrow \pm\infty,
\end{equation}
from which we  find that $C=\frac{1}{z}\sigma_3Q_\pm$.  By uniquenes of the solution of  the spectral  problem (\ref{lax2}), we get
 \begin{equation*}
 \phi(x,t,z)=\frac{1}{z}\phi\left(x,t,-\frac{u_0^2}{z}\right)\sigma_3Q_\pm.
 \end{equation*}
Again by using    (\ref{trans2}),  we have
 \begin{equation}
  \varphi_\pm(x,t,z)=\frac{1}{z}\varphi_\pm\left(x,t,-\frac{u_0^2}{z}\right)\sigma_3Q_\pm.
 \end{equation}

 What will come next are the symmetries of scattering matrix.
 For the individual columns, the above symmetries come to
 \begin{align}
 &\varphi_{\pm,1}(x,t,z)=\sigma_*\varphi_{\pm,2}^*(x,t,z^*), \hspace{0.5cm}\varphi_{\pm,2}(x,t,z)=-\sigma_*\varphi_{\pm,1}^*(x,t,z^*),\label{jostsym1}\\
&\varphi_{\pm,1}(x,t,z)=\sigma_1\varphi_{\pm,2}^*(x,t,-z^*), \hspace{0.5cm}\varphi_{\pm,2}(x,t,z)=\sigma_1\varphi_{\pm,1}^*(x,t,-z^*),\label{jostsym2}\\
 &\varphi_{\pm,1}(x,t,z)=\frac{u_\pm^*}{z}q_\pm^*\varphi_{\pm,2}\left(-\frac{u_0^2}{z}\right),\hspace{0.5cm}\varphi_{\pm,2}(x,t,z)=\frac{u_\pm}{z}q_\pm\varphi_{\pm,1}\left(-\frac{u_0^2}{z}\right)
 \label{jostsym3}.
 \end{align}
 Combining (\ref{jostsym1}) and (\ref{jostsym2}) we can get
\begin{equation}
\varphi_{\pm,1}(x,t,z)=\sigma_3\varphi_{\pm,1}(x,t,-z),\hspace{0.5cm}\varphi_{\pm,2}(x,t,z)=-\sigma_3\varphi_{\pm,2}(x,t,-z)\label{jostsym4}.
\end{equation}

By using  (\ref{jostsym1}) and     (\ref{scattering}),  we obtain a symmetry of scattering matrix
 \begin{equation}
 S^*(z^*)=-\sigma_*S(z)\sigma_*, \ \ \ S(z)=\sigma_1S^*(-z^*)\sigma_1,\label{scasym1}
 \end{equation}
which gives
 \begin{align}
 &s_{11}(z)=s_{22}^*(z^*),\hspace{0.5cm}s_{12}(z)=-s_{21}^*(z^*), \label{scasym1.1}\\
 &s_{11}(z)=s_{22}^*(-z^*),\hspace{0.5cm}s_{12}(z)= s_{21}^*(-z^*). \label{scasym2}
 \end{align}
  Combining (\ref{scasym1.1}) and (\ref{scasym2}) we obtain that $s_{11}(z)$ and $s_{22}(z)$ are even function, and $s_{12}(z)$ and $s_{21}(z)$ are odd function.

By using  (\ref{jostsym3}) and  (\ref{scattering}),  we obtain another symmetry of the scattering matrix
 \begin{equation}
S(z)=(\sigma_3Q_-)^{-1}S\left(-\frac{u_0^2}{z}\right)\sigma_3Q_+,
 \end{equation}
which leads to
 \begin{align}
 &s_{11}^*(z^*)=\frac{u_+}{u_-}s_{11}\left(-\frac{u_0^2}{z}\right),\hspace{0.5cm}s_{12}^*(z^*)=-\frac{u_+^*}{u_-}s_{12}\left(-\frac{u_0^2}{z}\right),\label{elementsym1}\\
 &s_{21}^*(z^*)=-\frac{u_+}{u_-^*}s_{21}\left(-\frac{u_0^2}{z}\right),\hspace{0.5cm}s_{22}^*(z^*)=\frac{u_+^*}{u_-^*}s_{22}\left(-\frac{u_0^2}{z}\right)\label{elementsym2}.
 \end{align}
 Finally, all the above symmetries then give the symmetries for the reflection coefficients
 \begin{equation}
 \rho(z)=\tilde{\rho}^*(-z^*)=-\tilde{\rho}^*(z^*)=\frac{u_-}{u_-^*}\tilde{\rho}\left(-\frac{u_0^2}{z}\right)=-\frac{u_-^*}{u_-}\rho^*\left(-\frac{u_0^2}{z}\right).
 \end{equation}
 So we have done the proof.
\end{proof}
\section{Asymptotics of $\varphi_\pm$ and $S(z)$}

\quad To  get    the Riemann-Hilbert problem in the next section, it is necessary to discuss the asymptotic behaviors of the   Jost
 solutions and scattering matrix as $z \rightarrow \infty$ and $z \rightarrow 0$.
\begin{proposition}
	The Jost solutions posses  the following asymptotic behaviors
	\begin{align}
	&\varphi_\pm(x,t,z)=e^{i\nu_\pm(x,t)\sigma_3}+O(z^{-1}),\hspace{0.5cm}z \rightarrow \infty,\label{asyvarphi1}\\
	&\varphi_\pm(x,t,z)=\frac{1}{z}e^{i\nu_\pm(x,t)\sigma_3}\sigma_3Q_\pm+O(1),\hspace{0.5cm}z \rightarrow 0\label{asyvarphi2},
	\end{align}
	where
	\begin{equation}
	\nu_\pm(x,t)=\frac{1}{2\alpha}\int_{\pm\infty}^x (u_0^2-|u|^2)dy.
	\end{equation}
\end{proposition}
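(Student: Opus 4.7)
The plan is to establish the $z\to\infty$ asymptotic by a formal power series expansion in $1/z$ directly from the Lax pair, and then obtain the $z\to 0$ behavior by a single application of the Joukowsky-type symmetry (\ref{symmetry3}), so as to avoid a second round of bookkeeping.

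First I would rewrite the $x$-part of (\ref{lax2}) in terms of the modified Jost function $\varphi_\pm = \phi_\pm e^{-i\theta\sigma_3}$. Using $\theta_x = -k\lambda$ and $[\sigma_3,e^{i\theta\sigma_3}]=0$ one finds $\varphi_{\pm,x} = X\varphi_\pm + ik\lambda\,\varphi_\pm\sigma_3$. Split $\varphi_\pm = D + O$ into its diagonal and off-diagonal parts and exploit the two key algebraic identities $\lambda + \alpha k = z$ and $\lambda - \alpha k = u_0^2/z$, which are immediate from (\ref{uniformization55}). Since $\sigma_3 D = D\sigma_3$ and $\sigma_3 O = -O\sigma_3$, the equation decouples block-wise into
\begin{align*}
D_x &= ik\,\tfrac{u_0^2}{z}\,D\sigma_3 + ikQO,\\
O_x &= ikz\,O\sigma_3 + ikQD.
\end{align*}

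Now I would substitute the formal ansatz $D = \sum_{n\geq 0} D_n z^{-n}$ and $O = \sum_{n\geq 1} O_n z^{-n}$ (the off-diagonal series must start at $z^{-1}$, because $Y_\pm = I + z^{-1}\sigma_3 Q_\pm$ forces $\varphi_\pm \to I$ at leading order). Using $k = (z - u_0^2/z)/(2\alpha)$ and matching powers of $z$, the coefficient of $z^1$ in the off-diagonal equation gives $O_1 = -QD_0\sigma_3$. Inserting this into the coefficient of $z^0$ of the diagonal equation and invoking $Q^2 = |u|^2 I$ produces the scalar ODE
\[
(D_0)_x \;=\; \frac{i\,(u_0^2 - |u|^2)}{2\alpha}\,D_0\,\sigma_3 .
\]
Each diagonal entry is decoupled; the boundary condition $\varphi_\pm \to Y_\pm \to I$ (at leading order in $z^{-1}$) as $x \to \pm\infty$ fixes the constants of integration and yields $D_0(x,t) = \exp(i\nu_\pm(x,t)\sigma_3)$ with $\nu_\pm$ exactly as stated, which is (\ref{asyvarphi1}).

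For the $z \to 0$ behaviour I would simply invoke the Joukowsky-type symmetry (\ref{symmetry3}): since $-u_0^2/z \to \infty$ as $z \to 0$,
\[
\varphi_\pm(x,t,z) \;=\; \frac{1}{z}\,\varphi_\pm\!\left(x,t,-\tfrac{u_0^2}{z}\right)\sigma_3 Q_\pm \;=\; \frac{1}{z}\bigl[e^{i\nu_\pm(x,t)\sigma_3} + O(z)\bigr]\sigma_3 Q_\pm ,
\]
which is exactly (\ref{asyvarphi2}). The main subtlety is promoting the formal expansion above to a genuine asymptotic statement; this is done by iterating the Volterra equations (\ref{jost1})--(\ref{jost2}) into a Neumann series and bounding the remainder uniformly on the respective analyticity domains, using the sign of $\mathrm{Im}(k\lambda)$ and the $L^1$-decay of $u-u_\pm$ on $\mathbb{R}^\pm$. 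I would treat this as a standard, if tedious, estimate rather than the conceptual heart of the argument.
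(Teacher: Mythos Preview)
Your argument is correct. The $z\to\infty$ part follows the same strategy as the paper---insert a formal $1/z$-expansion into the $x$-equation and read off an ODE for the leading term---but you organize the computation differently: you work with $\varphi_\pm$ directly and split into diagonal/off-diagonal blocks, exploiting the clean identities $\lambda\pm\alpha k = z,\,u_0^2/z$, whereas the paper substitutes into the equivalent Lax pair (\ref{eqlax1}) written in terms of $Y_\pm^{-1}\varphi_\pm$. Both routes land on the same scalar ODE $(D_0)_x = \tfrac{i}{2\alpha}(u_0^2-|u|^2)\sigma_3 D_0$; your block decomposition is arguably tidier and makes the cancellation $Q^2=|u|^2 I$ transparent. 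The genuinely different step is your treatment of $z\to 0$: the paper repeats the exercise with a second Laurent ansatz $\tilde\varphi^{(-1)}/z + \tilde\varphi^{(0)}+\cdots$ and determines the constant matrix $C$ by matching against $Y_\pm$ at $x\to\pm\infty$, while you simply invoke the already-proved symmetry (\ref{symmetry3}) to transport the $z\to\infty$ result to $z\to 0$. Your shortcut is more economical and explains \emph{why} the same factor $e^{i\nu_\pm\sigma_3}$ reappears, at the modest cost of a logical dependence on Proposition~1. Your closing remark about upgrading formal to genuine asymptotics via Neumann-series bounds on (\ref{jost1})--(\ref{jost2}) is well taken; the paper leaves this implicit as well.
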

\begin{proof}
	We consider the following  asymptotic  expansions
	\begin{align}
	&\varphi_\pm(x,t,z)=\varphi_\pm^{(0)}(x,t)+\frac{\varphi_\pm^{(1)}(x,t)}{z}+\frac{\varphi_\pm^{(2)}(x,t)}{z^2}+O(z^{-3}),\hspace{0.5cm}\text{as }z \rightarrow \infty.\label{expansion1}
	\end{align}
	Substituting   (\ref{expansion1})   into the Lax pair (\ref{eqlax1}) leads to
\begin{align}
	&[\varphi_\pm^{(0)},\sigma_3]=0,\\
	&[\varphi_\pm^{(1)}-\sigma_3Q_\pm\varphi_\pm^{(0)},\sigma_3]+2(Q-Q_\pm)\varphi_\pm^{(0)}=0,\label{u0}\\
    &\frac{i}{4\alpha}[\varphi_\pm^{(2)}-\sigma_3Q_\pm\varphi_\pm^{(1)},\sigma_3]
    +\frac{i}{2\alpha}((Q-Q_\pm)\varphi_\pm^{(1)}\nonumber-\sigma_3Q_\pm(Q-Q_\pm)\varphi_\pm^{(0)})\notag\\
    &-\frac{i}{2\alpha} (u_0^2-|u|^2)\sigma_3\varphi_\pm^{(0)}=0,\\
	&(\varphi_\pm^{(0)})_x=\frac{i}{4\alpha}[\varphi_\pm^{(2)}-\sigma_3Q_\pm\varphi_\pm^{(1)},\sigma_3]
+\frac{i}{2\alpha}((Q-Q_\pm)\varphi_\pm^{(1)}-\sigma_3Q_\pm(Q-Q_\pm)\varphi_\pm^{(0)})\notag\\
&\quad\quad\quad=\frac{i}{2\alpha} (u_0^2-|u|^2)\sigma_3\varphi_\pm^{(0)},
\end{align}
from  which  we can know  that $\varphi_\pm^{(0)}$ is a diagonal matrix, and
\begin{equation}
\varphi_\pm^{(0)}=e^{i\nu_\pm(x,t)\sigma_3},\label{pop}
\end{equation}
where
\begin{equation}
\nu_\pm(x,t)=\frac{1}{2\alpha}\int_{\pm\infty}^x (u_0^2-|u|^2)dy. \nonumber
\end{equation}
Therefore, we get the asymptotic behavior of the modified Jost solution
\begin{equation*}
\varphi_\pm(x,t,z)=e^{i\nu_\pm(x,t)\sigma_3}+O(z^{-1}),\hspace{0.5cm}z \rightarrow \infty,
\end{equation*}

Again from (\ref{u0}) and (\ref{pop}),   we find  that
\begin{equation}
u=\lim_{z\to\infty}e^{i\nu_\pm(x,t)\sigma_3}(z\varphi_\pm)_{12}\label{u1}
\end{equation}

In a similar way, substituting  the   expansion
\begin{align}
		&\varphi_\pm(x,t,z)=\frac{\tilde{\varphi}_\pm^{(-1)}(x,t)}{z}+\tilde{\varphi}_\pm^{(0)}(x,t)+\tilde{\varphi}_\pm^{(1)}(x,t)z+O(z^2),\hspace{0.5cm}\text{as }z \rightarrow 0\label{expansion2}
	\end{align}
 into the Lax equation (\ref{eqlax1}).   we obtain that
 \begin{equation*}
 \tilde{\varphi}_\pm^{(-1)}=e^{i\nu_\pm(x,t)\sigma_3}C,
 \end{equation*}
 where $C$ is a constant matrix.  Again from the expansion (\ref{expansion2}),  we have
\begin{equation}
\lim_{x\to\pm\infty} z\tilde{\varphi}_\pm=zY_\pm=z(I+\frac{1}{z}\sigma_3Q_\pm)=\lim_{x\to\pm\infty}(\tilde{\varphi}_\pm^{(-1)}+z\tilde{\varphi}_\pm^{(0)}+\cdot\cdot\cdot).
\end{equation}
Therefore $C=\sigma_3Q_\pm$ and $\tilde{\varphi}_\pm^{(-1)}=e^{i\nu_\pm(x,t)\sigma_3}\sigma_3Q_\pm$. Finally, we get the asymptotic behavior
\begin{equation*}
\varphi_\pm(x,t,z)=\frac{1}{z}e^{i\nu_\pm(x,t)\sigma_3}\sigma_3Q_\pm+O(1),\hspace{0.5cm}z \rightarrow 0.
\end{equation*}
\end{proof}

\begin{proposition}
The  scattering matrices admit   asymptotic behaviors
	\begin{align}
	&S(z)=e^{-i\nu_0\sigma_3}+O(z^{-1}),\hspace{0.5cm}z \rightarrow \infty,\label{asympsca1}\\
	&S(z)={\rm diag}\left(\frac{u_-}{u_+},\frac{u_+}{u_-}\right)e^{i\nu_0\sigma_3}+O(z),\hspace{0.5cm}z \rightarrow 0,\label{asympsca2}
	\end{align}
	where
	\begin{equation}
	\nu_0=\frac{1}{2\alpha}\int_{-\infty}^{+\infty}(u_0^2-|u|^2)dy.\label{nu0}
	\end{equation}
\end{proposition}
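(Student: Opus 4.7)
The approach is to combine the Wronskian representations (\ref{scatteringcoefficient1})--(\ref{scatteringcoefficient2}) of the scattering coefficients with the Jost asymptotics (\ref{asyvarphi1})--(\ref{asyvarphi2}) established in the preceding proposition. These Wronskian formulas express every $s_{ij}$ in terms of columns of $\varphi_\pm$ and the factor $\gamma = 1 + u_0^2/z^2$, and since Wronskians of such columns are $x,t$-independent (cf.\ (\ref{abl})--(\ref{detphi})), one is free to substitute the $z\to\infty$ or $z\to 0$ leading terms evaluated at any convenient $x$.

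The key algebraic identity is
\begin{equation*}
\nu_+(x,t) - \nu_-(x,t) = \frac{1}{2\alpha}\int_{+\infty}^{-\infty}(u_0^2-|u|^2)\,dy = -\nu_0,
\end{equation*}
which follows immediately from the definitions of $\nu_\pm$ and $\nu_0$ and collapses the local phases $e^{\pm i\nu_\pm}$ in the Jost asymptotics into the global phase $e^{\mp i\nu_0}$ appearing in $S(z)$.

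For $z\to\infty$ I would substitute $\varphi_{\pm,1} = (e^{i\nu_\pm},0)^T + O(z^{-1})$ and $\varphi_{\pm,2} = (0,e^{-i\nu_\pm})^T + O(z^{-1})$ into the Wronskians defining $s_{11}$ and $s_{22}$, together with $\gamma = 1 + O(z^{-2})$; a one-line computation yields $s_{11}(z) = e^{-i\nu_0} + O(z^{-1})$ and $s_{22}(z) = e^{i\nu_0} + O(z^{-1})$. For $s_{12}$ and $s_{21}$ the relevant columns are parallel at leading order, so the corresponding Wronskians are already $O(z^{-1})$; one then uses the fact that $k\lambda = (z^2 - u_0^4/z^2)/(4\alpha)$ is real on $\Sigma = \mathbb{R}\cup i\mathbb{R}$, so the factors $e^{\pm 2i\theta}$ appearing in the denominators are unimodular and do not spoil the estimate. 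For $z\to 0$, I would use the other expansion $\varphi_\pm = z^{-1}e^{i\nu_\pm\sigma_3}\sigma_3 Q_\pm + O(1)$, whose columns both diverge like $z^{-1}$; the resulting $z^{-2}$ pole of each diagonal Wronskian cancels exactly against the $z^{-2}$ pole of $\gamma$, and applying $|u_\pm|^2 = u_0^2$ (i.e.\ $u_\pm^*/u_0^2 = 1/u_\pm$) collapses the diagonal limits to $(u_-/u_+)e^{i\nu_0}$ and $(u_+/u_-)e^{-i\nu_0}$, matching (\ref{asympsca2}).

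The $z\to\infty$ piece is essentially direct substitution. The genuinely delicate step is the $z\to 0$ analysis: one must track two cancellations in tandem (the $z^{-2}$ pole of the Wronskian against the $z^{-2}$ pole of $\gamma$, and the recombination of the boundary data $u_\pm,u_\pm^*$ into the ratios $u_\pm/u_\mp$), and then check that the off-diagonal Wronskians retain only the $O(z)$ remainder after division by $e^{\pm 2i\theta}\gamma$ on $\Sigma$.
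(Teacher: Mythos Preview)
Your proposal is correct and follows essentially the same route as the paper: substituting the Jost asymptotics (\ref{asyvarphi1})--(\ref{asyvarphi2}) into the scattering relations and exploiting $\nu_+-\nu_-=-\nu_0$. The only cosmetic difference is that for $z\to\infty$ the paper works with the matrix identity $e^{i\theta\hat\sigma_3}S(z)=\varphi_-^{-1}\varphi_+$ directly rather than entrywise Wronskians, but this is the same computation; your explicit handling of the off-diagonal entries via the unimodularity of $e^{\pm 2i\theta}$ on $\Sigma$ is in fact more careful than the paper's sketch.
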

\begin{proof}

  By using  (\ref{scattering}) and (\ref{asyvarphi1}), for $z \rightarrow \infty,$  we have
	\begin{align*}
e^{i\theta(z)\hat{\sigma}_3}S(z)&=\varphi_-^{-1}\varphi_+=(e^{-i\nu_-(x,t)\sigma_3}+O(z^{-1}))(e^{i\nu_+(x,t)\sigma_3}+O(z^{-1}))\\
	&=e^{-i(\nu_--\nu_+)\sigma_3}+O(z^{-1}) =e^{-i\nu_0\sigma_3}+O(z^{-1}),
	\end{align*}
	where
	\begin{equation*}
	\nu_0=\frac{1}{2\alpha}\int_{-\infty}^{+\infty}(u_0^2-|u|^2)dy.
	\end{equation*}
	
Similarly,
	\begin{align*}
	s_{11}=\frac{Wr(\phi_{+,1},\phi_{-,2})}{\gamma}&=\det\left(\begin{array}{cc}
	O(1) & \frac{1}{z}u_-e^{i\nu_-}+O(1)\\
	\frac{-1}{z}e^{-i\nu_+}u_+^*+O(1) & O(1)
	\end{array}\right)(\frac{z^2}{u_0^2}-\frac{z^4}{u_0^4}+\cdot\cdot\cdot)\\
	&=\frac{u_-}{u_+}e^{i\nu_0}+O(z), \ \ z \rightarrow 0.
	\end{align*}

\end{proof}

\section{Discrete Spectrum and Residue Conditions}

\quad The discrete spectrum of the scattering problem is the set of all values $z\in \mathbb{C}\setminus\Sigma$ which satisfy the eigenfunctions exist in $L^2(\mathbb{R})$.
We suppose that $s_{11}(z)$ has $N_1$ simple zeros $z_1,...,z_{N_1}$ on $D^+\cap\{z\in\mathbb{C}:{\rm Im}z>0,|z|>u_0\}$, and $N_2$ simple  zeros $w_1,...,w_m$ on the circle $\{z=u_0e^{i\varphi}:\frac{\pi}{2}<\varphi<\pi\}$.   The  symmetries  (\ref{symmetry1})-(\ref{symmetry3}) imply that
\begin{equation}
s_{11}(\pm z_n)=0 \Leftrightarrow s_{22}^*(\pm z_n^*)=0 \Leftrightarrow s_{22}\left(\pm \frac{u_0^2}{z_n}\right)=0 \Leftrightarrow s_{11}\left(\pm \frac{u_0^2}{z_n}\right)=0, \hspace{0.5cm}n=1,...,N_1,\nonumber
\end{equation}
and on the circle
\begin{equation}
s_{11}(\pm w_m)=0\Leftrightarrow s_{22}^*(\pm w_m^*)=0, \hspace{0.5cm}m=1,...,N_2. \nonumber
\end{equation}
 It is convenient to define $\zeta_n=z_n$ and $\zeta_{n+N_1}=-\frac{u_0^2}{z_n^*}$ for $n=1,\cdot\cdot\cdot,N_1$,  $\zeta_m=w_{m-2N_1}$ for $m=2N_1+1,\cdot\cdot\cdot,2N_1+N_2$. Therefore the discrete spectrum is
\begin{equation}
Z=\left\{\pm  \zeta_n,\pm  \  \zeta_n^*\right\}_{n=1}^{2N_1+N_2}, \label{spectrals}
\end{equation}
whose distribution on the $z$-plane   are shown  in Figure 3.
\begin{figure}[H]
	\centering
	\includegraphics[width=0.45\linewidth]{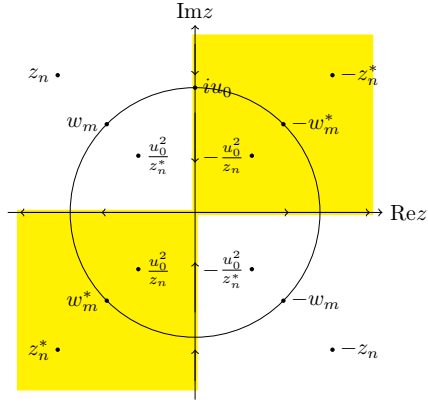}
	\caption{Distribution of the discrete spectrum.}
	\label{fig:figure1}
\end{figure}

All residues on   the  discrete  spectrum are calculated as follows.

\noindent {\bf I.}  For  $s_{11}(\zeta_n)=0, \  n=1,\cdot\cdot\cdot,N_1$.

From determinant   (\ref{scatteringcoefficient1}), we  know  that  the eigenfunction $\varphi_{+,1}(\zeta_n)$ and $\varphi_{-,2}(\zeta_n)$ must
be proportional, then there exist a constant $b_n\neq0$,   such  that
\begin{equation}
\phi_{+,1}(x,t,\zeta_n)=b_n\phi_{-,2}(x,t,\zeta_n), \nonumber
\end{equation}
equivalently,
\begin{equation}
\varphi_{+,1}(x,t,\zeta_n)=b_ne^{2i\theta(\zeta_n)}\varphi_{-,2}(x,t,\zeta_n),\label{tion1}
\end{equation}
by which, the  residue on $z=\zeta_n$  is given by
\begin{equation}
\res_{z=\zeta_n}\left[\frac{\varphi_{+,1}(x,t,z)}{s_{11}(z)}\right]=\lim_{z\to \zeta_n}(z-\zeta_n)\frac{\varphi_{+,1}(x,t,z)}{s_{11}(z)}=C_ne^{-2i\theta(\zeta_n)}\varphi_{-,2}(x,t,\zeta_n),\label{resrelation1}
\end{equation}
 where $C_n=\frac{b_n}{{s_{11}}'(\zeta_n)}$.\\[-1pt]

\noindent {\bf II.}  For   $s_{11}(-\zeta_n)=0, \  n=1,\cdot\cdot\cdot,N_1$.

With the symmetries (\ref{jostsym4}) and (\ref{scatteringcoefficient1}),  we get
\begin{align}
&\varphi_{+,1}(x,t,-\zeta_n)=-b_ne^{-2i\theta(-\zeta_n)}\varphi_{-,2}(x,t,-\zeta_n),\\
&\varphi_{-,2}(x,t,-\zeta_n)=-\sigma_3\varphi_{-,2}(x,t,\zeta_n).
\end{align}
Also noticing that   $(s_{11}(-\zeta_n))'=-s_{11}'(\zeta_n)$ and  $\theta(-\zeta_n)=\theta(\zeta_n)$,   direct computation shows that
\begin{equation}
\res_{z=-\zeta_n}\left[\frac{\varphi_{+,1}(x,t,z)}{s_{11}(z)}\right]=\frac{-b_ne^{-2i\theta(-\zeta_n)}\varphi_{-,2}(x,t,-\zeta_n)}{-s_{11}'(\zeta_n)}=-C_ne^{-2i\theta(\zeta_n)}\sigma_3\varphi_{-,2}(x,t,\zeta_n).\label{resrelation2}
\end{equation}\\[-1pt]

\noindent {\bf III.}  For    $s_{22}(\zeta_n^*)=0, \  n=1,\cdot\cdot\cdot,N_1$.

 There exist  a constant $\tilde{b}_n\neq0 $, such   that
\begin{equation}
\varphi_{+,2}(x,t,\zeta_n^*)=\tilde{b}_ne^{2i\theta(\zeta_n^*)}\varphi_{-,1}(x,t,\zeta_n^*),\label{resrel}
\end{equation}
by which  we can derive that
\begin{equation}
\res_{z=\zeta_n^*}\left[\frac{\varphi_{+,2}(x,t,z)}{s_{22}(z)}\right]=\tilde{C}_ne^{2i\theta(\zeta_n^*)}\varphi_{-,1}(x,t,\zeta_n^*),\label{resrelation2}
\end{equation}
where $\tilde{C}_n=\frac{\tilde{b}_n}{s_{22}'(\zeta_n^*)}$.

Taking the conjugate on both sides of the equation (\ref{resrel})  and multiplying    by $\sigma_*$  leads to
\begin{align}
\sigma_*\varphi_{+,2}^*(x,t,\zeta_n^*)=\tilde{b}_n^*e^{-2i\theta(\zeta_n^*)^*}\sigma_*\varphi_{-,1}^*(x,t,\zeta_n^*), \nonumber
\end{align}
Again  by  the symmetry (\ref{symmetry1}), we  get
\begin{equation}
\varphi_{+,1}(x,t,\zeta_n)=-\tilde{b}_n^*e^{2i\theta(\zeta_n)}\varphi_{-,2}(x,t,\zeta_n).
\end{equation}
 which comparing with (\ref{tion1}) gives  $b_n=-\tilde{b}_n^*$.
 And from the (\ref{scasym1.1}) shows that  $s_{11}'(z)=s_{22}^*(z^*)'$,  so we have $\tilde{C}^*_n=-C_n$.\\[-1pt]

\noindent {\bf IV. }  For $s_{22}(-\zeta_n^*)=0, \  n=1,\cdot\cdot\cdot,N_1$.

  From   (\ref{jostsym4}) and (\ref{scatteringcoefficient2}), we obtain
\begin{align}
&\varphi_{-,1}(x,t,-\zeta_n^*)=\sigma_3\varphi_{-,1}(x,t,\zeta_n^*),\nonumber\\
&\varphi_{+,2}(x,t,-\zeta_n^*)=-\tilde{b}_ne^{2i\theta(-\zeta_n^*)}\varphi_{-,1}(x,t,-\zeta_n^*),\nonumber
\end{align}
by which  the residue is  given  way
\begin{equation}
\res_{z=-\zeta_n^*}\left[\frac{\varphi_{+,2}(x,t,z)}{s_{22}(z)}\right]=\tilde{C}_ne^{2i\theta(\zeta_n^*)}\sigma_3\varphi_{-,1}(x,t,\zeta_n^*).\label{resrelation4}
\end{equation}\\[-1pt]

\noindent {\bf  V. }  For $s_{11}\left(\pm \frac{u_0^2}{\zeta_n^*}\right)=0, \ s_{22}\left(\pm\frac{u_0^2}{\zeta_n}\right)=0, \  n=1,\cdot\cdot\cdot,N_1$.

From  (\ref{elementsym1}),  we get the relation
\begin{equation}
s_{11}\left(-\frac{u_0^2}{\zeta_n^*}\right)=\frac{u_-}{u_+}s_{11}^*(\zeta_n),\nonumber
\end{equation}
which implies that
\begin{equation}
s_{11}'\left(\pm \frac{u_0^2}{\zeta_n}\right)=\left(\frac{\zeta_n^*}{u_0}\right)^2\frac{u_-}{u_+}(s_{11}^*(\zeta_n))'.
\end{equation}
Similarly,
\begin{align}
&s_{22}'\left(-\frac{u_0^2}{\zeta_n}\right)=\left(\frac{\zeta_n}{u_0}\right)^2\frac{u^*_-}{u^*_+}(s_{22}^*(\zeta_n^*))',\nonumber\\
&s_{11}'\left(\frac{u_0^2}{\zeta_n^*}\right)=-\left(\frac{\zeta_n^*}{u_0}\right)^2\frac{u_-}{u_+}(s_{11}^*(\zeta_n))',\nonumber\\
&s_{22}'\left(\frac{u_0^2}{\zeta_n}\right)=-\left(\frac{\zeta_n}{u_0}\right)^2\frac{u^*_-}{u^*_+}(s_{22}^*(\zeta_n^*))'.\nonumber
\end{align}

Finally, combining the above relations, we get
\begin{align}
&\res_{z=-\frac{u_0^2}{\zeta_n^*}}\left[\frac{\varphi_{+,1}(x,t,z)}{s_{11}(z)}\right]=C_{N_1+n}e^{-2i\theta\left(-\frac{u_0^2}{\zeta_n^*}\right)}\varphi_{-,2}\left(x,t,-\frac{u_0^2}{\zeta_n^*}\right),\label{resrelation5}\\
&\res_{z=-\frac{u_0^2}{\zeta_n}}\left[\frac{\varphi_{+,2}(x,t,z)}{s_{22}(z)}\right]=\tilde{C}_{N_1+n}e^{2i\theta\left(-\frac{u_0^2}{\zeta_n}\right)}\varphi_{-,1}\left(x,t,-\frac{u_0^2}{\zeta_n}\right),\label{resrelation6}\\
&\res_{z=\frac{u_0^2}{\zeta_n^*}}\left[\frac{\varphi_{+,1}(x,t,z)}{s_{11}(z)}\right]=-C_{N_1+n}e^{-2i\theta\left(-\frac{u_0^2}{\zeta_n^*}\right)}\sigma_3\varphi_{-,2}\left(x,t,-\frac{u_0^2}{\zeta_n^*}\right),\label{resrelation7}\\
&\res_{z=\frac{u_0^2}{\zeta_n}}\left[\frac{\varphi_{+,2}(x,t,z)}{s_{22}(z)}\right]=\tilde{C}_{N_1+n}e^{2i\theta\left(-\frac{u_0^2}{\zeta_n}\right)}\sigma_3\varphi_{-,1}\left(x,t,-\frac{u_0^2}{\zeta_n}\right),\label{resrelation8}
\end{align}
where
\begin{equation}
C_{N_1+n}=\frac{u_-^*}{u_-}\left(\frac{u_0}{\zeta_n^*}\right)^2\tilde{C}_n,\hspace{0.5cm}\tilde{C}_{N_1+n}=\frac{u_-}{u_-^*}\left(\frac{u_0}{\zeta_n}\right)^2C_n,
\end{equation}
with the relation
\begin{equation}
\tilde{C}_{N_1+n}=-C_{N_1+n}^*.\nonumber
\end{equation}\\[-1pt]

\noindent {\bf  VI. }  For   $s_{11} (\pm w_m )=0$, \ $s_{22} (\pm w_m^* )=0$, \  $n=1,\cdots,N_2$.

Analogously,  we consider the residue conditions at $\pm w_m$ and $\pm w_m^*$ and  get
\begin{align}
&\label{resrelation9}\res_{z=w_m}\frac{\varphi_{+,1}(x,t,z)}{s_{11}(z)}=C_{2N_1+m}e^{-2i\theta(w_m)}\varphi_{-,2}(x,t,w_m),\\
&\label{resrelation10}\res_{z=-w_m}\frac{\varphi_{+,1}(x,t,z)}{s_{11}(z)}=-C_{2N_1+m}e^{-2i\theta(w_m)}\sigma_3\varphi_{-,2}(x,t,w_m),\\
&\label{resrelation11}\res_{z=w_m^*}\frac{\varphi_{+,2}(x,t,z)}{s_{22}(z)}=\tilde{C}_{2N_1+m}e^{2i\theta(w_m^*)}\varphi_{-,1}(x,t,w_m^*),\\
&\label{resrelation12}\res_{z=-w_m^*}\frac{\varphi_{+,2}(x,t,z)}{s_{22}(z)}=\tilde{C}_{2N_1+m}e^{2i\theta(w_m^*)}\sigma_3\varphi_{-,1}(x,t,w_m^*),
\end{align}
where $C_{2N_1+m}=\frac{b_{2N_1+m}}{s_{11}'(w_m)}$, $\tilde{C}_{2N_1+m}=-C_{2N_1+m}^*$ and $b_{2N_1+m}$ are  arbitrary  constants.

\section{Riemann-Hilbert Problem}

\indent Based on above  results analyticity, symmetry and asymptotic of Jost solutions $\varphi_\pm$
and scattering data $s_{ij}(z)$,  we now can establish a generalized Riemann-Hilbert Problem associated with
the solution of the modified NLS equation with nonzero boundary condition.

 \subsection{The   Riemann-Hilbert Problem}

\begin{proposition}
	Define the sectionally meromorphic matrices
	\begin{equation}
	M(x,t,z)=\Bigg\{\begin{array}{ll}
	M^+=\left(\begin{array}{cc}
	\dfrac{\varphi_{+,1}}{s_{11}}, & \varphi_{-,2}\\
	\end{array}\right), &\text{as } z\in D^+,\\
	M^-=\left(\begin{array}{cc}
	\varphi_{-,1}, & \dfrac{\varphi_{+,2}}{s_{22}}\\
	\end{array}\right), &\text{as }z\in D^-,\\
	\end{array}
	\end{equation}
	then a multiplicative matrix Riemann-Hilbert problem is proposed:
	
	$\bullet$ Analyticity: $M(x,t,z)$ is analytic in $\mathbb{C}\setminus\Sigma$ and has single poles.
	
	$\bullet$ Jump condition
	\begin{equation}
	M^-(x,t,z)=M^+(x,t,z)(I-G(x,t,z)),\hspace{0.5cm}z \in \Sigma,\label{jump}
	\end{equation}
	where
	\begin{equation}
	G(x,t,z)=\left(\begin{array}{cc}
	0 & e^{-2i\theta}\tilde{\rho}(z)\\
	e^{2i\theta}\rho(z) & \rho(z)\tilde{\rho}(z)
	\end{array}\right).
	\end{equation}
	
	$\bullet$ Asymptotic behaviors
	\begin{align}
	&M(x,t,z) \sim e^{i\nu_-\sigma_3}+O(z^{-1}),\hspace{0.5cm}z \rightarrow \infty,\label{asymbehv1}\\
	&M(x,t,z) \sim \frac{1}{z}e^{i\nu_-\sigma_3}\sigma_3Q_-+O(1),\hspace{0.5cm}z \rightarrow 0.\label{asymbehv2}
	\end{align}
\end{proposition}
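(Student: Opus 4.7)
The plan is to verify, in turn, the three defining properties of the sectionally meromorphic matrix $M(x,t,z)$, using only the analyticity, symmetry, and asymptotic statements already established in Sections 3--6.

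For analyticity and the pole structure, I would first recall that $\varphi_{+,1}$ and $\varphi_{-,2}$ are analytic on $D^+$ while $\varphi_{-,1}$ and $\varphi_{+,2}$ are analytic on $D^-$ (Section 3), and that the Wronskian formulas (\ref{scatteringcoefficient1})--(\ref{scatteringcoefficient2}) together with this column analyticity render $s_{11}$ analytic in $D^+$ and $s_{22}$ analytic in $D^-$. It then follows directly that $M^+$ is meromorphic on $D^+$ with simple poles exactly at the zeros of $s_{11}$, and $M^-$ is meromorphic on $D^-$ with simple poles at the zeros of $s_{22}$; these zeros and their residues were catalogued in Section 7, fixing the pole part of $M$ on the entire discrete spectrum $Z$.

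For the jump condition across $\Sigma$, I would take the columnwise scattering identities (\ref{idvsca2}), divide the first by $s_{11}$ and the second by $s_{22}$, and then re-express each column of $M^-$ as a linear combination of the columns of $M^+$ in the common frame $\{\varphi_{-,1},\varphi_{-,2}\}$. A short algebraic rearrangement, using $\rho = s_{21}/s_{11}$, $\tilde\rho = s_{12}/s_{22}$, and $\det S = 1$, then produces the factor $I - G(x,t,z)$ with the stated oscillatory entries involving $e^{\pm 2i\theta}$.

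The asymptotic behaviors follow from Propositions 2 and 3 by substitution into the definitions of $M^\pm$: the identity $\nu_- - \nu_+ = \nu_0$, which is immediate from the definition (\ref{nu0}) of $\nu_0$, collapses the $z\to\infty$ limit to $e^{i\nu_-\sigma_3}$, while for $z\to 0$ the boundary constraint $|u_\pm| = u_0$ reconciles $u_+^* u_+/u_-$ with $u_-^*$ and delivers the prescribed leading term $\tfrac{1}{z}e^{i\nu_-\sigma_3}\sigma_3 Q_-$. The main bookkeeping obstacle I anticipate is in the jump step: correctly tracking the conjugation by $e^{i\theta\hat\sigma_3}$ through $\varphi_+ = \varphi_- e^{i\theta\hat\sigma_3}S$ and keeping straight the oscillatory factors in each entry of the jump matrix. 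Once that conjugation is handled once and for all, the analyticity and asymptotic parts are essentially direct substitutions into facts that have already been proved.
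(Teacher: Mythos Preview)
Your proposal is correct and follows essentially the same approach as the paper: analyticity is read off from the analyticity of the Jost columns and of $s_{11},s_{22}$; the jump is obtained by solving the scattering relations (\ref{idvsca1})--(\ref{idvsca2}) for the columns of $M^\pm$ and collecting the $e^{\pm 2i\theta}$ factors; and the asymptotics come by direct substitution of Propositions~2 and~3 into the definition of $M$. Your remarks about the cancellation $\nu_--\nu_+=\nu_0$ and the use of $|u_\pm|=u_0$ at $z\to 0$ are exactly the small algebraic points the paper leaves implicit.
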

\begin{proof}
	The analyticity  of $M_\pm$  can be find out from   the analyticity of the    $\varphi_\pm$  and $S(z)$.
From (\ref{idvsca1}), we get
	\begin{align}
	\varphi_{-,2}(x,t,z)&=-\tilde{\rho}(z)\varphi_{-,1}(x,t,z)+\frac{\varphi_{+,2}(x,t,z)}{s_{22}(z)},\\
	\frac{\varphi_{+,1}(x,t,z)}{s_{11}(z)}
	&=(1-\rho(z)\tilde{\rho}(z))\varphi_{-,1}(x,t,z)+\rho(z)\frac{\varphi_{+,2}(x,t,z)}{s_{22}(z)},
	\end{align}
which leads to  the jump condition (\ref{jump}).
	
  With the   asymptotic behaviors of the  Jost solution (\ref{asyvarphi2}) and  scattering matrix  (\ref{asympsca2}), we can derive that
	\begin{equation}
	M^+(x,t,z) \sim \frac{1}{z}e^{i\nu_-\sigma_3}\sigma_3Q_-+O(1), \ \ z \rightarrow 0,
	\end{equation}
	\begin{equation}
	M^-(x,t,z) \sim \frac{1}{z}e^{i\nu_-\sigma_3}\sigma_3Q_-+O(1), \ \ z \rightarrow 0.
	\end{equation}
 Similarly, we can get another asymptotic behavior (\ref{asymbehv1}) immediately.
\end{proof}

\subsection{ Reconstruction Formula}

\quad Based on the results in  section 7,   we can obtain the Residue Conditions on the meromorphic matrices $M^\pm$ as follow:
\begin{align}
&\res_{z=\zeta_n}M^+=\left(\begin{array}{cc}
C_ne^{-2i\theta(\zeta_n)}\varphi_{-,2}(x,t,\zeta_n) & 0
\end{array}\right),\hspace{0.5cm}n=1,\cdot\cdot\cdot,2N_1+N_2,\\
&\res_{z=-\zeta_n}M^+=\left(\begin{array}{cc}
-C_ne^{-2i\theta(\zeta_n)}\sigma_3\varphi_{-,2}(x,t,\zeta_n) & 0
\end{array}\right),\hspace{0.5cm}n=1,\cdot\cdot\cdot,2N_1+N_2,\\
&\res_{z=\zeta_n^*}M^-=\left(\begin{array}{cc}
0 & \tilde{C}_ne^{2i\theta(\zeta_n^*)}\varphi_{-,1}(x,t,\zeta_n^*)
\end{array} \right),\hspace{0.5cm}n=1,\cdot\cdot\cdot,2N_1+N_2,\\
&\res_{z=-\zeta_n^*}M^-=\left(\begin{array}{cc}
0 & \tilde{C}_ne^{2i\theta(\zeta_n^*)}\sigma_3\varphi_{-,1}(x,t,\zeta_n^*)
\end{array} \right),\hspace{0.5cm}n=1,\cdot\cdot\cdot,2N_1+N_2.
\end{align}
Subtracting out the asymptotic behaviors and the pole contributions, we obtain that
\begin{align}
\begin{split}
&M^--e^{i\nu_-\sigma_3}-\frac{1}{z}e^{i\nu_-\sigma_3}\sigma_3Q_--\sum_{n=1}^{2N_1+N_2}\frac{\res_{\zeta_n^*}M^-}{z-\zeta_n^*}-\sum_{n=1}^{2N_1+N_2}\frac{\res_{\zeta_n}M^+}{z-\zeta_n}-\\
&\sum_{n=1}^{2N_1+N_2}\frac{\res_{\zeta_n}M^+}{z+\zeta_n}-\sum_{n=1}^{2N_1+N_2}\frac{\res_{-\zeta_n^*}M^-}{z+\zeta_n^*}\\
&=M^+-e^{i\nu_-\sigma_3}-\frac{1}{z}e^{i\nu_-\sigma_3}\sigma_3Q_--\sum_{n=1}^{2N_1+N_2}\frac{\res_{\zeta_n^*}M^-}{z-\zeta_n^*}-\sum_{n=1}^{2N_1+N_2}\frac{\res_{\zeta_n}M^+}{z-\zeta_n}\\
&\sum_{n=1}^{2N_1+N_2}\frac{\res_{\zeta_n}M^+}{z+\zeta_n}-\sum_{n=1}^{2N_1+N_2}\frac{\res_{-\zeta_n^*}M^-}{z+\zeta_n^*}-M^+G.
\end{split}
\end{align}

Apparently, the left-hand side   is analytic in $D^-$ and is $O(z^{-1})$ as $z \rightarrow \infty$, while the sum of the first five terms of the right-hand side   is analytic in $D^+$ and is $O(z^{-1})$ as $z \rightarrow \infty$.

 Using Plemelj's formula, we obtain
\begin{align}
\begin{split} M(x,t,z)=&e^{i\nu_-\sigma_3}+\frac{1}{z}e^{i\nu_-\sigma_3}\sigma_3Q_-+\sum_{n=1}^{2N_1+N_2}\frac{\res_{\zeta_n}M^+}{z-\zeta_n}+\sum_{n=1}^{2N_1+N_2}\frac{\res_{\zeta_n^*}M^-}{z-\zeta_n^*}+\\
&\sum_{n=1}^{2N_1+N_2}\frac{\res_{\zeta_n}M^+}{z+\zeta_n}+\sum_{n=1}^{2N_1+N_2}\frac{\res_{-\zeta_n^*}M^-}{z+\zeta_n^*}+\frac{1}{2\pi i}\int_\Sigma\frac{M^+(x,t,\zeta)}{\zeta-z}G(x,t,z)d\zeta,\hspace{0.5cm}\label{rhpsolution}
\end{split}
\end{align}
Where $z\in\mathbb{C}\setminus\Sigma.$

In the remaining parts of this section we would like to give the reconstruction formula
from the solution of the Riemann-Hilbert problem.
When $z\rightarrow \infty$, from the previous equation we can get
\begin{align}
\begin{split}
M(x,t,z)&=e^{i\nu_-\sigma_3}+\frac{1}{z}\left(e^{i\nu_-\sigma_3}\sigma_3Q_- +\sum_{n=1}^{2N_1+N_2}(\res_{z=\zeta_n}M^++\res_{z=\zeta_n^*}M^-+\res_{z=-\zeta_n}M^++\res_{z=-\zeta_n^*}M^-)\right.\\
&\left.-\frac{1}{2\pi i}\int_\Sigma M^+(x,t,\zeta)G(x,t,\zeta)d\zeta\right)+O(z^{-2}).
\end{split}\label{RHP}
\end{align}

When z=$\zeta_n$, we can calculate the second column of $M^+$ in (\ref{rhpsolution}). Then we obtain
\begin{align}
\begin{split}
\varphi_{-,2}(x,t,\zeta_n)=\left(\begin{array}{c}
\frac{1}{\zeta_n}u_-e^{i\nu_-}\\
e^{-i\nu_-}
\end{array}\right)+2\sum_{k=1}^{2N_1+N_2}&\frac{\tilde{C}_ke^{2i\theta(\zeta_k^*)}}{\zeta_n^2-\zeta_k^{*2}}Z_1\varphi_{-,1}(x,t,\zeta_k^*)\\
&+\frac{1}{2\pi i}\int_\Sigma\frac{M^+(x,t,\zeta)}{\zeta-\zeta_n}G(x,t,\zeta)d\zeta,
\end{split}\label{varphi12}
\end{align}
where $Z_1=\text{diag}\left(
\zeta_n, \zeta_k^* \right), $  for $n=1,...,2N_1+N_2$.

In the same way, when z=$\zeta_n^*$ we can evaluate the first column of $M^-$ and obtain
\begin{align}
\begin{split}
\varphi_{-,1}(x,t,\zeta_n^*)=\left(\begin{array}{c}
e^{i\nu_-}\\
-\frac{1}{\zeta_n}u_-^*e^{-i\nu_-}
\end{array}\right)+2\sum_{j=1}^{2N_1+N_2}&\frac{C_je^{-2i\theta(\zeta_j)}}{\zeta_n^{*2}-\zeta_j^2}Z_2\varphi_{-,2}(x,t,\zeta_j)\\
&+\frac{1}{2\pi i}\int_\Sigma\frac{M^+(x,t,\zeta)}{\zeta-\zeta_n^*}G(x,t,\zeta)d\zeta,
\end{split}\label{varphi11}
\end{align}
where $Z_2=\text{diag}( \zeta_j, \zeta_n^*),$ for $n=1,...,2N_1+N_2$.

By using (\ref{u1}) and  (\ref{RHP}),   we get the reconstruction formula for the potential
\begin{equation}
u(x,t)=e^{2i\nu_-}u_-+e^{i\nu_-}\left\{2\sum_{n=1}^{2N_1+N_2}\tilde{C}_ne^{2i\theta(\zeta_n^*)}\varphi_{-,11}(\zeta_n^*)-\frac{1}{2\pi i}\int_\Sigma(M^+G)_{12}(x,t,\zeta)d\zeta\right\}.\label{potential}
\end{equation}

\subsection{Trace formula and  theta  condition}

Define two  functions $\beta^\pm(z)$ as follow
\begin{align}
\begin{split}\label{peli}
&\beta^+(z)=s_{11}(z)\prod_{n=1}^{2N_1+N_2}\frac{z^2-\zeta_n^{*2}}{z^2-\zeta_n^2}e^{i\nu_0},\\
&\beta^-(z)=s_{22}(z)\prod_{n=1}^{2N_1+N_2}\frac{z^2-\zeta_n^2}{z^2-\zeta_n^{*2}}e^{-i\nu_0},
\end{split}
\end{align}
 which implies the relation $ \beta^+(z)\beta^-(z)=s_{11}(z)s_{22}(z) $ and $ \beta^\pm(z) \rightarrow 1, \ z \rightarrow \pm \infty$.

 From the analyticity of the scattering matrix, we see that the above functions are analytic and have no zeros in $D^+$ and $D^-$ respectively.

From $\det S(z)=1$,  we obtain that
\begin{equation}
\frac{1}{s_{11}s_{22}}=\frac{s_{11}s_{22}-s_{12}s_{21}}{s_{11}s_{22}}=1-\rho(z)\tilde{\rho}(z)=1+\rho(z)\rho^*(z^*),\nonumber
\end{equation}
so that
\begin{equation}
\beta^+(z)\beta^-(z)=\frac{1}{1+\rho(z)\rho^*(z^*)},\hspace{0.5cm}z\in\Sigma.\nonumber
\end{equation}
Taking logarithms to the above relation and using the Plemelj' formula, we have
\begin{equation}
\log\beta_\pm(z)=\mp\frac{1}{2\pi i}\int_\Sigma\frac{\log[1+\rho(\zeta)\rho^*(\zeta^*)]}{\zeta-z}d\zeta,\hspace{0.5cm}z\in D^\pm.
\end{equation}
Substituting them back into (\ref{peli}) leads to
\begin{equation}
s_{11}(z)=\exp\left[-\frac{1}{2\pi i}\int_\Sigma\frac{\log[1+\rho(\zeta)\rho^*(\zeta^*)]}{\zeta-z}d\zeta\right]\\\prod_{n=1}^{2N_1+N_2}\frac{z^2-\zeta_n^2}{z^2-\zeta_n^{*2}}e^{-i\nu_0},\hspace{0.5cm}z\in D^+,\label{trace}\\
\end{equation}
\begin{equation}
s_{22}(z)=\exp\left[\frac{1}{2\pi i}\int_\Sigma\frac{\log[1+\rho(\zeta)\rho^*(\zeta^*)]}{\zeta-z}d\zeta\right]\prod_{n=1}^{2N_1+N_2}\frac{z^2-\zeta_n^{*2}}{z^2-\zeta_n^2}e^{i\nu_0},\hspace{0.5cm}z\in D^-.
\end{equation}
The above formulas are called trace formulas, which express the analytic scattering coefficient in terms of
the discrete eigenvalues and the reflection coefficient.

Taking the limit $z\rightarrow0$  in (\ref{trace}), and using the asymptotic behavior of the scattering matrix (\ref{asympsca2}) we then obtain the following  theta  condition
\begin{equation}
\arg\left(\frac{u_-}{u_+}\right)+2\nu_0=8\sum_{n=1}^{2N_1+N_2}\arg \zeta_n+\frac{1}{2\pi}\int_\Sigma\frac{\log[1+\rho(\zeta)\rho^*(\zeta^*)]}{\zeta}d\zeta.\label{thetacondition}
\end{equation}

\section{Solving the Riemann-Hilbert problem}

\subsection{The formula for $N$-soliton solutions}

We consider the reflectionless potential with the reflection coefficient $\rho(z)=0$, then (\ref{potential}) becomes
\begin{equation}
u(x,t)=e^{2i\nu_-}u_-+e^{i\nu_-}2\sum_{n=1}^{2N_1+N_2}\tilde{C}_ne^{2i\theta(\zeta_n^*)}\varphi_{-,11}(\zeta_n^*).\label{reconstr}
\end{equation}

Denote
\begin{equation}
c_j(x,t,z)=\frac{C_j}{z^2-\zeta_j^2}e^{-2i\theta(x,t,\zeta_j)},\hspace{0.5cm}j=1,\cdot\cdot\cdot,2N_1+N_2.
\end{equation}
We can obtain that
\begin{equation}
c_j^*(x,t,\zeta_k^*)=\frac{C_j^*}{\zeta_k^2-\zeta_j^{*2}}e^{2i\theta(x,t,\zeta_j^*)},\hspace{0.5cm}j=1,\cdot\cdot\cdot,2N_1+N_2.
\end{equation}
 Then from(\ref{varphi12}) and (\ref{varphi11}),  we can drive
 \begin{align}
 &\varphi_{-,12}(x,t,\zeta_j)=\frac{1}{\zeta_j}u_-e^{i\nu_-}+\sum_{k=1}^{2N_1+N_2}2\zeta_jc_k^*(\zeta_j^*)\varphi_{-,11}(x,t,\zeta_k^*),\hspace{0.5cm}j=1,\cdot\cdot\cdot,2N_1+N_2.\label{varphi-12}\\
 &\varphi_{-,11}(x,t,\zeta_n^*)=e^{i\nu_-}+\sum_{j=1}^{2N_1+N_2}2\zeta_jc_j(\zeta_n^*)\varphi_{-,12}(x,t,\zeta_j),\hspace{0.5cm}n=1,\cdot\cdot\cdot,2N_1+N_2\label{varphi-11}.
 \end{align}

Substituting (\ref{varphi-12}) into (\ref{varphi-11}), we get
\begin{equation}
\varphi_{-,11}(x,t,\zeta_n^*)=e^{i\nu_-}+2e^{i\nu_-}\sum_{j=1}^{2N_1+N_2}u_-c_j(\zeta_n^*)+\sum_{j=1}^{2N_1+N_2}\sum_{k=1}^{2N_1+N_2}4\zeta_j^2c_j(\zeta_n^*)c_k^*(\zeta_j^*)\varphi_{-,11}(x,t,\zeta_k^*), \label{systemmu}\\
\end{equation}
for $n=1,\cdot\cdot\cdot,2N_1+N_2.$

Next,   we would like to write above  system (\ref{systemmu})  as a  matrix form, so let
\begin{equation*}
\mathbf{X}=(X_1,...,X_{2N_1+N_2})^t,\hspace{0.5cm}\mathbf{B}=(B_1,...,B_{2N_1+N_2})^t,
\end{equation*}
where
\begin{equation*}
X_n=\varphi_{-,11}(x,t,\zeta_n^*),\hspace{0.5cm}B_n=1+2u_-\sum_{j=1}^{2N_1+N_2}c_j(\zeta_n^*),\hspace{0.5cm}n=1,...,2N_1+N_2.
\end{equation*}
Again define the $(2N_1+N_2)\times(2N_1+N_2)$ matrix $A=(A_{nk})$, where
\begin{equation*}
A_{nk}=-\sum_{j=1}^{2N_1+N_2}4\zeta_j^2c_j(\zeta_n^*)c_k^*(\zeta_j^*),\hspace{0.5cm}n,k=1,...,2N_1+N_2.
\end{equation*}
Then the linear system (\ref{systemmu})  becomes
$$M\mathbf{X}=e^{i\nu_-}\mathbf{B},$$
where $M=I+A=(\mathbf{M}_1,...,\mathbf{M}_{2N_1+N_2})$. The solution of the system is
\begin{equation}
X_n=e^{i\nu_-}\frac{\det M_n^{\text{ext}}}{\det M},\hspace{0.5cm}n=1,...,2N_1+N_2,
\end{equation}
where
\begin{equation*}
M_n^{\rm ext}=(\mathbf{M}_1,...,\mathbf{M}_{n-1},\mathbf{B},\mathbf{M}_{n+1},...,\mathbf{M}_{2N_1+N_2}).
\end{equation*}
Therefore, substituting   the above $X_1,...,X_{2N_1+N_2}$  back into the reconstruction formula (\ref{reconstr}), we then get  formulae for
$N$-soliton solutions
\begin{equation}
u(x,t)=e^{2i\nu_-}u_-+2e^{2i\nu_-}\frac{\det M^{\text{aug}}}{\det M},\label{refpotential}
\end{equation}
where the $(2N_1+N_2+1)\times(2N_1+N_2+1)$ matrix is given by
\begin{equation*}
M^{\text{aug}}=\left(\begin{array}{cc}
0 & \mathbf{Y}\\
\mathbf{B} & M
\end{array}\right),\hspace{0.5cm}\mathbf{Y}=(Y_1,...,Y_{2N_1+N_2}),
\end{equation*}
and
\begin{equation*}
Y_n=\tilde{C}_ne^{2i\theta(x,t,\zeta_n^*)},\hspace{0.5cm}n=1,...,2N_1+N_2.
\end{equation*}

\subsection{One-Soliton Solutions}

\quad   As an application of  $N$-soliton solution formula (\ref{refpotential}),     we  present two kinds of  the one-soliton solutions of modified NLS equation according to different distribution of the spectrum (\ref{spectrals}). Without loss of generality, we take $u_0=1$, due to the fact  if $u(x,t)$ is a solution of (\ref{MNLS}), then $cu(cx,c^t)$ is also the solution of (\ref{MNLS}).\\[0pt]

\noindent {\bf Case I.  $N_1=1$, $N_2=0$:}

In this case,  one eigenvalue  fall to outside of  circle, and suppose that  $\zeta_1=Ze^{i\gamma}$  with $Z>1$, $\gamma \in (\frac{\pi}{2},\pi)$, then the other points in discrete spectrum are  \begin{align}
&-\zeta_1=-Ze^{i\gamma}, \ \   \zeta_2=-\frac{1}{Z}e^{i\gamma}, \ \   -\zeta_2=\frac{1}{Z}e^{i\gamma},\nonumber \\
& \zeta_1^*=Ze^{-i\gamma}, \ \   -\zeta_1^*=-Ze^{-i\gamma}, \ \  \zeta_2^*=-\frac{1}{Z}e^{-i\gamma}, \ \ -\zeta_2^*=\frac{1}{Z}e^{-i\gamma}.\nonumber
\end{align}
By using  the theta condition (\ref{thetacondition}), we
have
\begin{equation}
\arg\left( u_-/u_+ \right)+2\nu_0=16\gamma,\nonumber
\end{equation}
which allow us to set $u_-=1$  and  $u_+=e^{i(2\nu_0-16\gamma)}$.  And we can also know that $C_1=e^{\xi+i\varphi}$, with $\xi$, $\varphi \in \mathbb{R}$  and $C_2=-\frac{1}{Z^2}e^{\xi+i(2\gamma-\varphi)}$.

Substituting  above   data into  formulae   (\ref{refpotential}),  we get the one-soliton solution
\begin{equation}
u(x,t)=e^{2i\nu_-}+2e^{2i\nu_-}\frac{\det\left(\begin{array}{ccc}
	0 & Y_1 & Y_2\\
	B_1 & 1+A_{11} & A_{12}\\
	B_2 & A_{21} & 1+A_{22}
	\end{array}\right)}{\det\left(\begin{array}{cc}
	1+A_{11} & A_{12}\\
	A_{21} & 1+A_{22}
	\end{array}\right)},
\end{equation}
where
\begin{align*}
&\theta(x,t,\zeta_j)=-\frac{1}{4\alpha}(\zeta_j^2-\frac{1}{\zeta_j^2})\left[x+(\frac{1}{2\alpha}(\zeta_j^2+\frac{1}{\zeta_j^2})-4\alpha-\frac{2}{\alpha})t\right],\hspace{0.5cm}j=1,2,\\
&c_j(x,t,z)=\frac{C_j}{z^2-\zeta_j^2}e^{-2i\theta(x,t,\zeta_j)},\hspace{0.5cm}j=1,2,\\
&B_n=1+2\sum_{j=1}^{2}c_j(\zeta_n^*),\ \ \ Y_n=-C_n^*e^{2i\theta(x,t,\zeta_n^*)},\hspace{0.5cm}n=1,2,\\
&A_{nk}=-\sum_{j=1}^{2}4\zeta_j^2c_j(\zeta_n^*)c_k^*(\zeta_j^*),\hspace{0.5cm}n,k=1,2.
\end{align*}
Specially taking  parameters  $\alpha=1$, $Z=2$, $\gamma=\frac{3\pi}{4}$, $\xi=0$, $\varphi=0$,
  we  draw a  graphics   of wave propagation in perspective view and along  x- and $t$-orientation for
 the solution $|u(x,t)|$ in Figure 4 and Figure 5.
This kind of   solution   wave  exhibits localized breather   wave;   While  propagation of the wave
  along  $x$  and  $t$ orientation  are  locally   oscillatory  in the middle,   but  two tail parts are    stationary and  go  to a nonzero constant,
  unlike soliton with zero boundary condition, where it go  to    zero.

\begin{figure}[H]
	\centering
	\subfigure[]{
		\includegraphics[width=5.6cm]{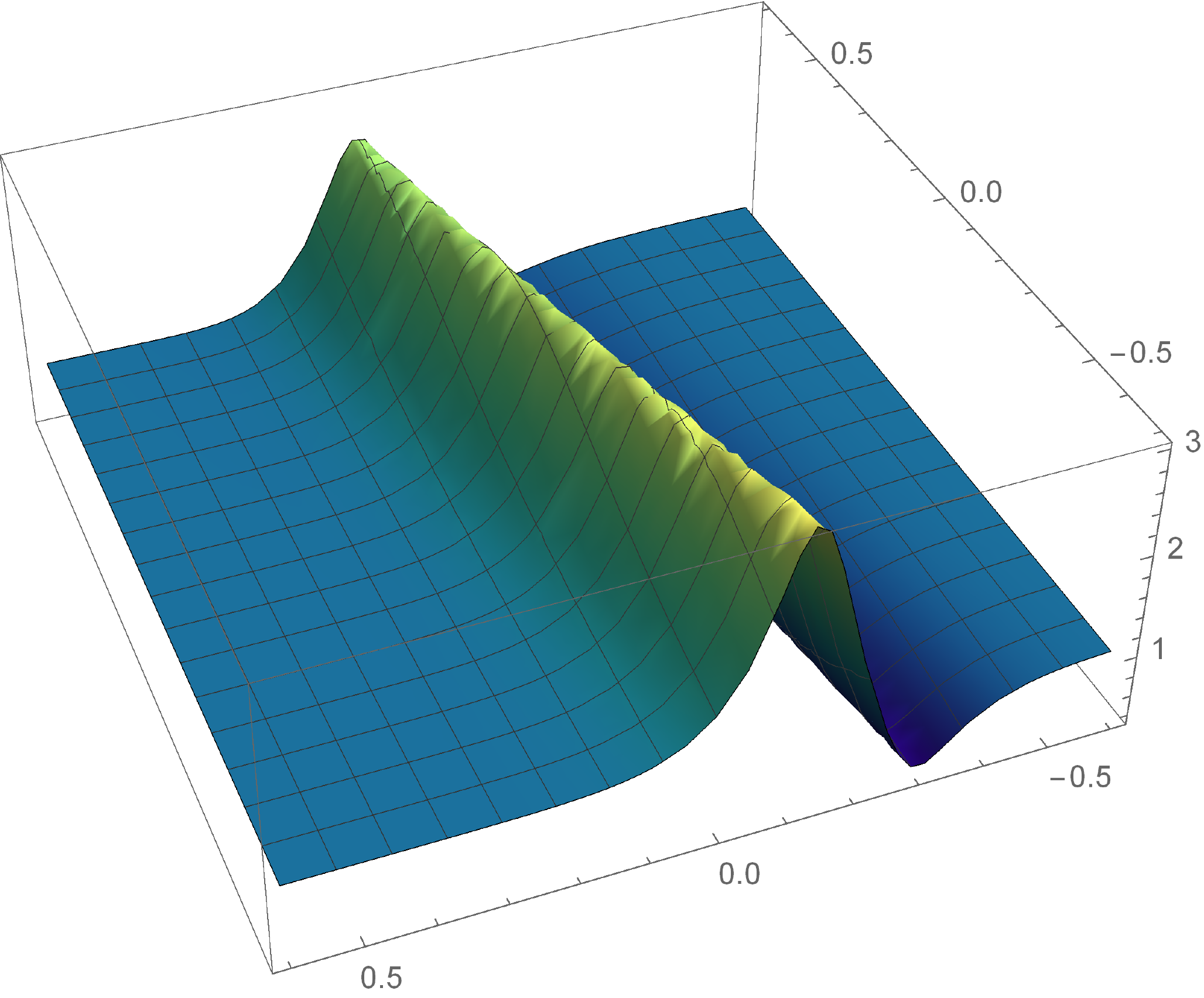}
	}
	\quad
	\subfigure[]{
		\includegraphics[width=0.30\linewidth]{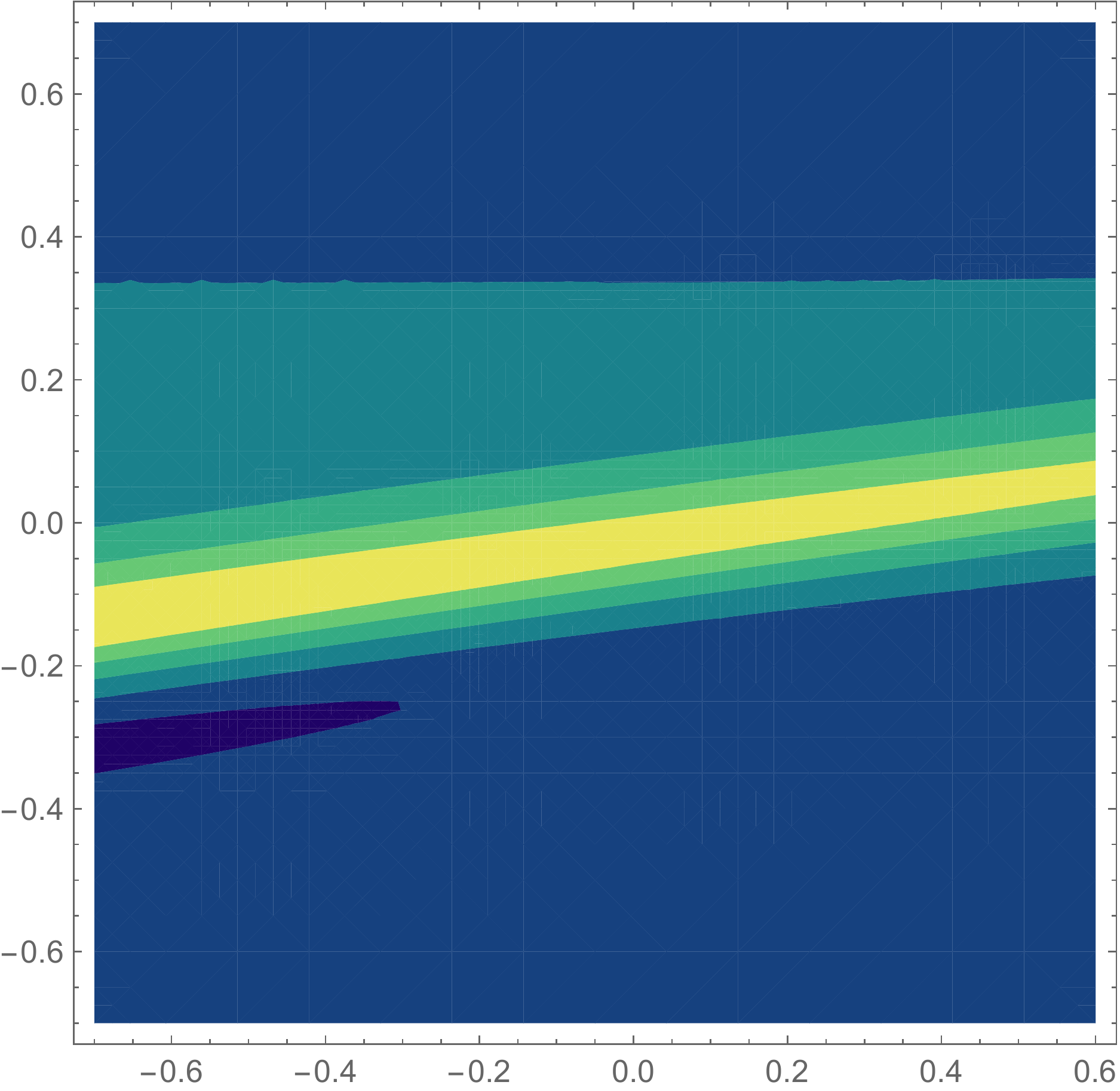}
	}
	\caption{ Breather  solution: (a)  Perspective view of the wave;  (b)  The contour  of the wave.}
\end{figure}
\begin{figure}[H]
	\centering
	\subfigure[]{
	\includegraphics[width=0.41\linewidth]{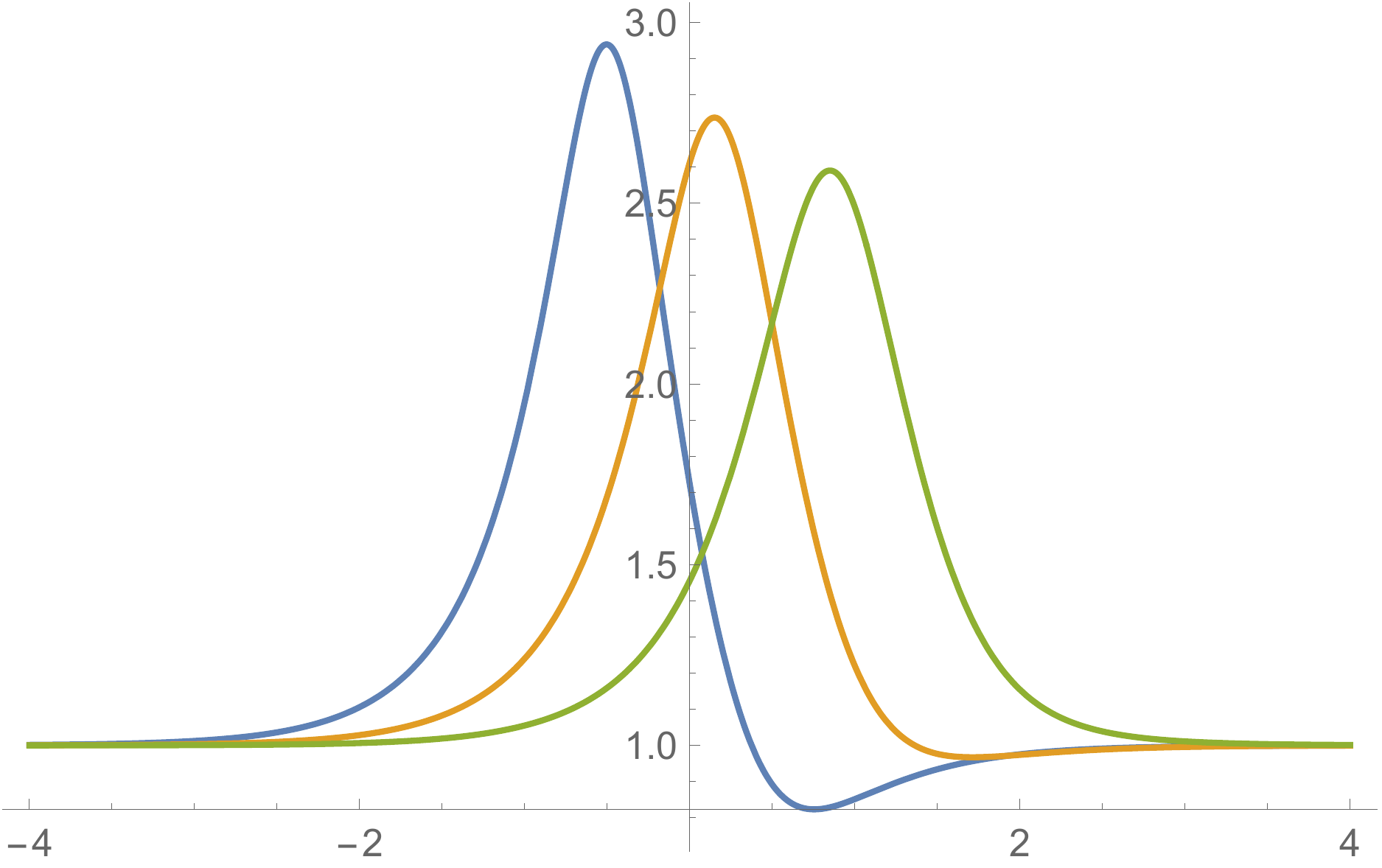}
	}
\subfigure[]{
	\includegraphics[width=0.41\linewidth]{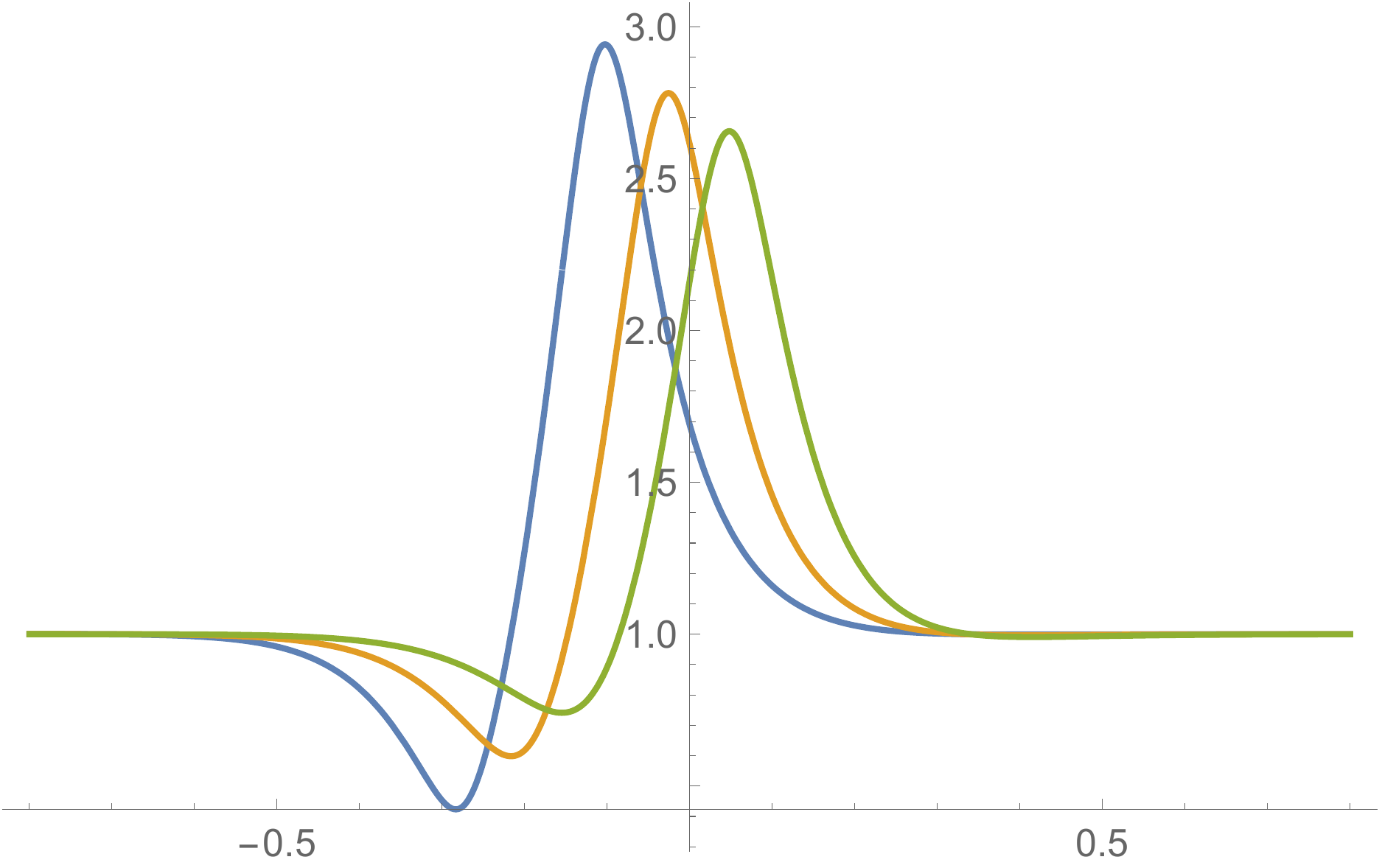}
}
	\caption{Breather  solution:  (a) Propagation pattern  of the wave along $x$-orientation  with $t=-0.1$(blue),  $t=0$(orange),  $t=0.1$(green); (b) Propagation pattern of the  wave along $t$-orientation with $x=-0.5$(blue),  $x=0$(orange),  $x=0.5$(green).}
	\label{fig:3}
\end{figure}
\noindent {\bf Case II.  $N_1=0$, $N_2=1$:}

Taking a eigenvalue $\zeta_1=e^{i\beta}$ with $\beta\in(\frac{\pi}{2},\pi)$ on the circle $|z|=1$, then the discrete spectrum are  $\{e^{i\beta},-e^{i\beta},e^{-i\beta},-e^{-i\beta}\}$.
By using theta condition (\ref{thetacondition}),  we obtain that
\begin{equation}
\arg(u_-/u_+)=8\beta-2\nu_0,
\end{equation}
which allow us to let  $u_-=1$ and  $u_+=e^{i(2\nu_0-8\beta)}$. Take  $C_1=e^{i\tau+\kappa}$  with $\tau,\kappa\in\mathbb{R}$,    we find another kind of  one-soliton solution
\begin{equation}
u(x,t)=e^{2i\nu_-}+2e^{2i\nu_-}\frac{\det\left(\begin{array}{cc}
	0 & Y \\
	B & 1+A
	\end{array}\right)}{1+A},
\end{equation}
where
\begin{align*}
&\theta(x,t,\zeta_1)=-\frac{1}{2\alpha}\sinh(\beta)\left[x+(\frac{1}{\alpha}\cosh(\beta)-4\alpha-\frac{2}{\alpha})t\right],\\
&c_1(x,t,z)=\frac{C_1}{z^2-\zeta_1^2}e^{-2i\theta(x,t,\zeta_1)},\ \ B=1+2c_1(\zeta_n^*),\\
&A=-4\zeta_1^2|c_1(\zeta_1^*)|^2,\ \ Y=-C_1^*e^{2i\theta(x,t,\zeta_n^*)}.
\end{align*}
Specially taking  parameters  $\alpha=1$, $\beta=\frac{3\pi}{4}$, $\tau=\frac{\pi}{2}$, $\kappa=0$,  propagation of the wave in perspective view and along  x- and $t$-orientation for
 the solution $|u(x,t)|$ are  given in Figure 6 and Figure 7.
This kind of  one-soliton  wave  exhibits    bell-type
localized wave like soliton wave zero boundary condition.    While  propagation  of wave
  along  $x$  and  $t$ orientation  are  locally   oscillatory  in the middle,    but  two tail parts  are    stationary and go  to zero.

\begin{figure}[H]
	\centering
	\subfigure[]{
		\includegraphics[width=5.4cm]{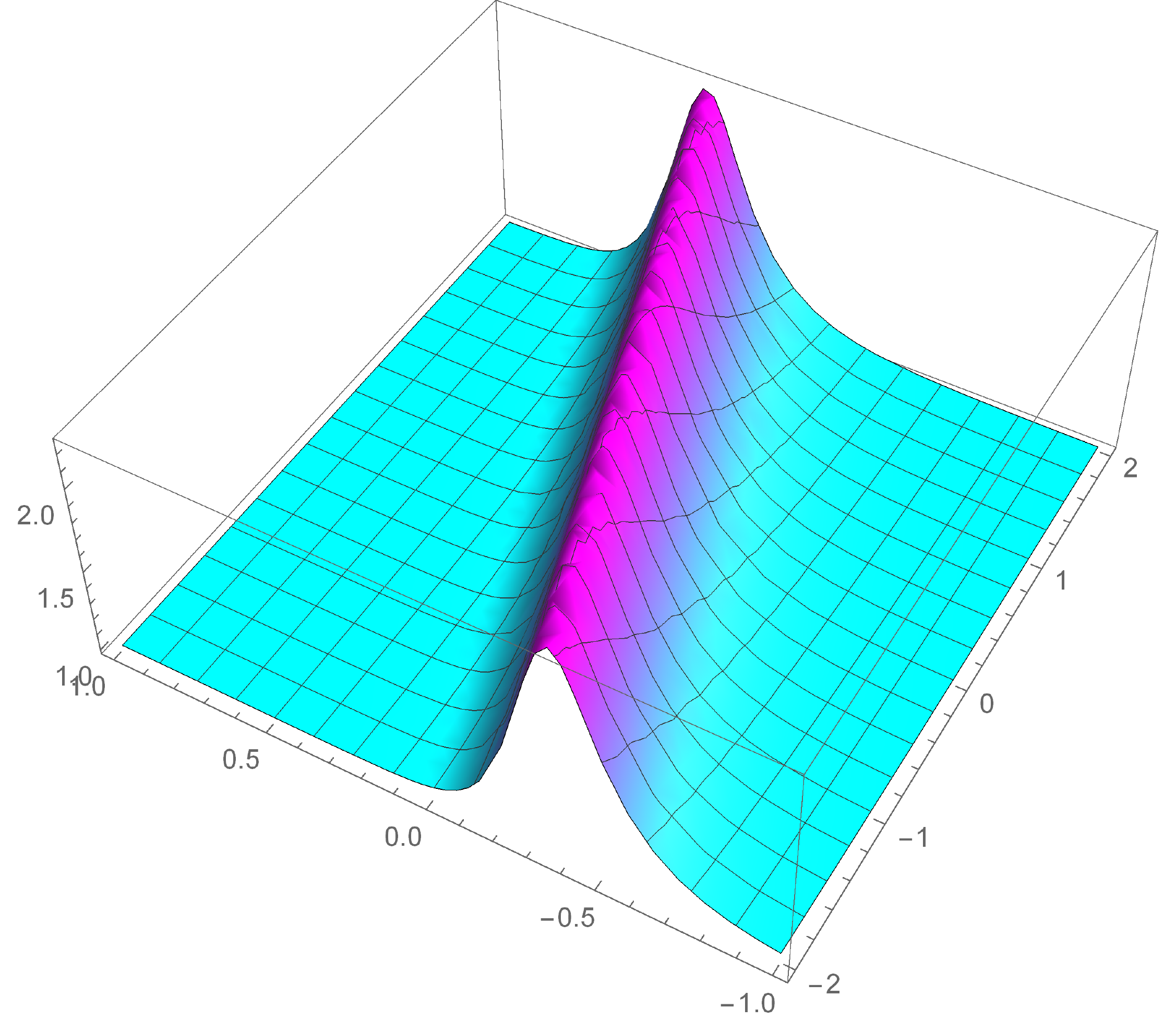}
	}
	\quad
	\subfigure[]{
		\includegraphics[width=0.31\linewidth]{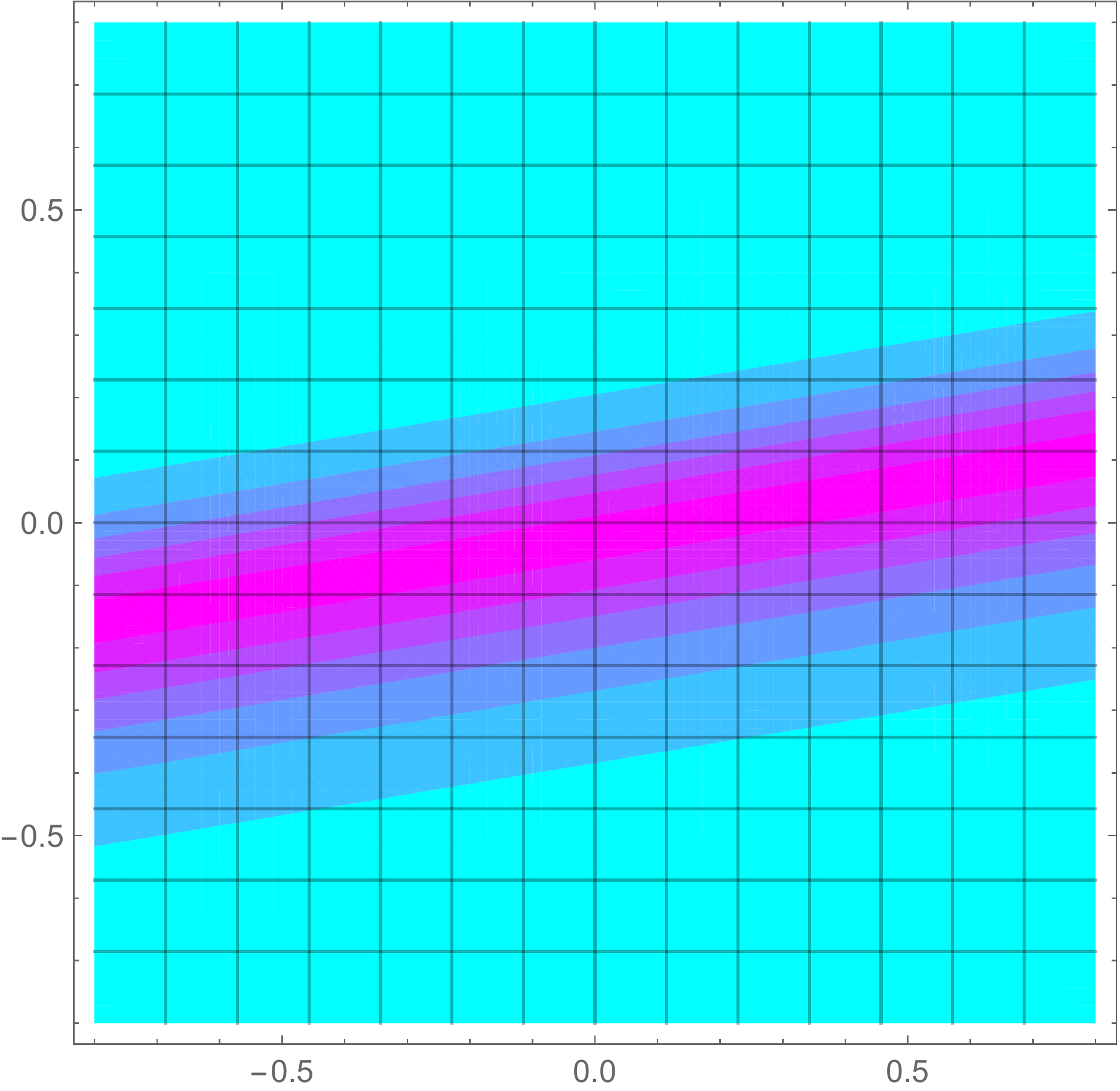}
	}
	\caption{One-solition solution: (a)  Perspective view of the wave;  (b)  The contour  of the wave. }
\end{figure}
\begin{figure}[H]
	\centering
	\subfigure[]{
		\includegraphics[width=0.41\linewidth]{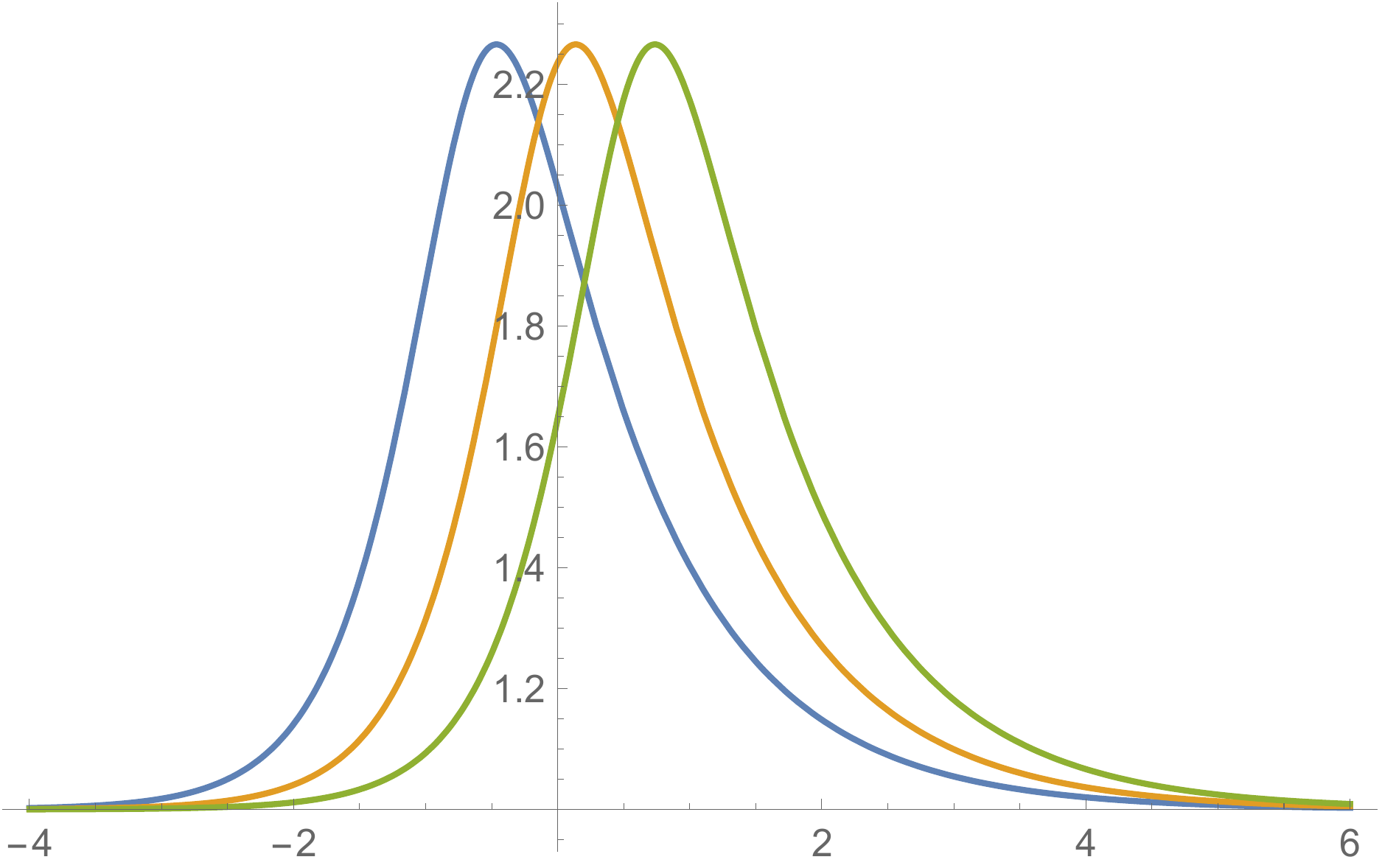}
	}
	\subfigure[]{
		\includegraphics[width=0.41\linewidth]{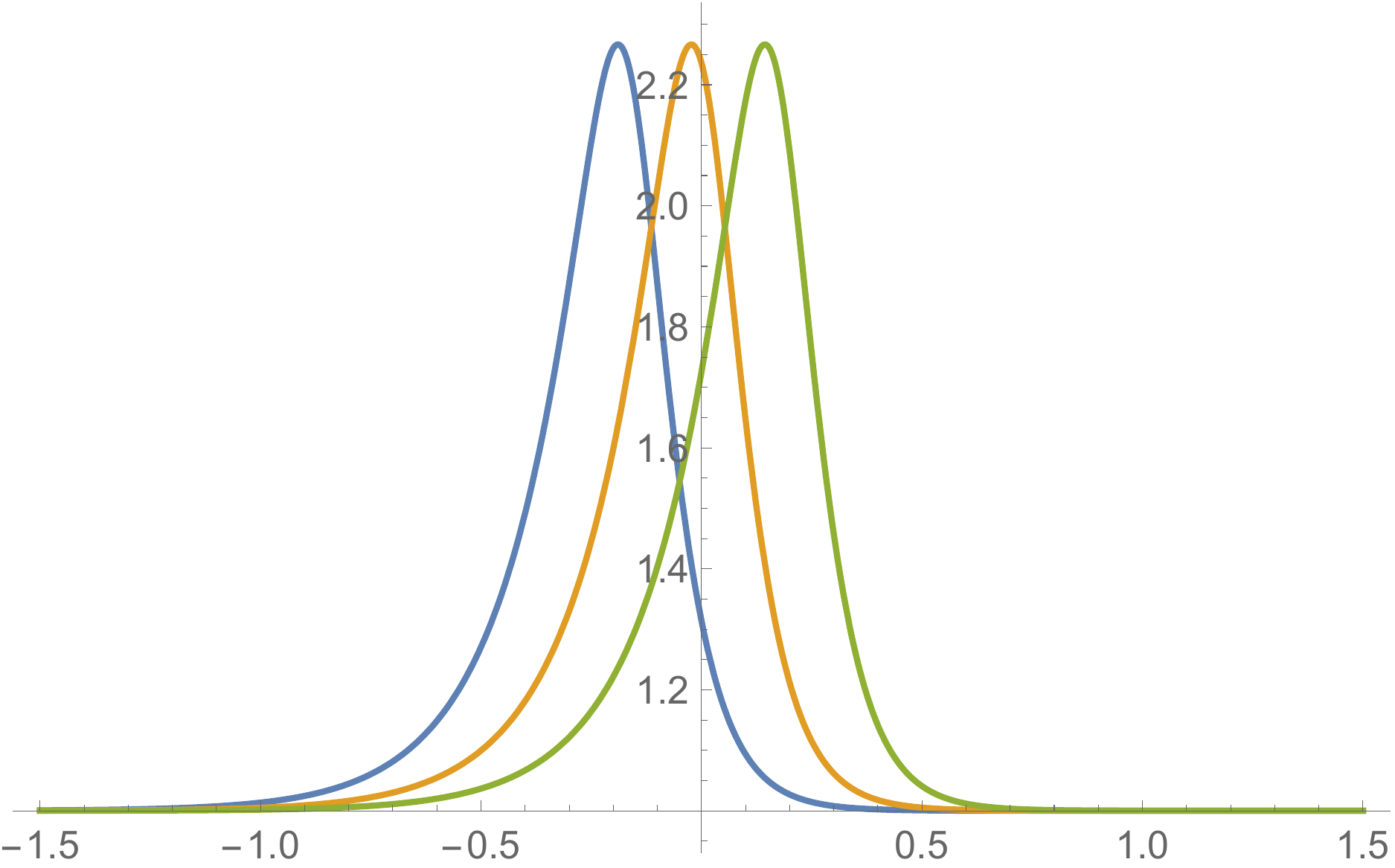}
	}
	\caption{ One-soliton solution:  (a) Propagation pattern of the  wave along $x$-orientation  with $t=-0.1$(blue),  $t=0$(orange),  $t=0.1$(green); (b)
Propagation pattern of the  wave along $t$-orientation with  $x=-1$(blue),  $x=0$(orange),  $x=1$(green).}
	\label{fig:3}
\end{figure}

\subsection{Two-Soliton Solutions}

 Here we  consider   three kinds of two-soliton solutions   according to different distribution of the spectrum
$Z=\left\{\pm  \zeta_n,\pm  \zeta_n^*\right\}_{n=1}^{2N_1+N_2}$.\\

\noindent {\bf Case I.  $N_1=1$, $N_2=1$:}

We take two eigenvalues
 $$\zeta_1=Ze^{i\gamma}, \ \ \zeta_3=e^{i\beta}, \  \  Z>1, \ \  \gamma, \ \beta \in (\frac{\pi}{2},\pi),  $$
 then the other points in discrete spectrum are
\begin{align}
&-\zeta_1=-Ze^{i\gamma}, \ \ \zeta_2=-\frac{1}{Z}e^{i\gamma}, \ \  -\zeta_2=\frac{1}{Z}e^{i\gamma}, \ \  \zeta_1^*=Ze^{-i\gamma},\ \ -\zeta_1^*=-Ze^{-i\gamma},\nonumber\\
  &  \zeta_2^*=-\frac{1}{Z}e^{-i\gamma},  \ \ -\zeta_2^*=\frac{1}{Z}e^{-i\gamma}, \ \ -\zeta_3=-e^{i\beta}, \ \  \zeta_3^*=e^{-i\beta},\ \ -\zeta_3^*=-e^{-i\beta}.\nonumber
  \end{align}

By using  the theta condition (\ref{thetacondition}), we
have
\begin{equation}
\arg\left( u_-/u_+ \right)+2\nu_0=16\gamma+8\beta,\nonumber
\end{equation}
from which we choose  $u_-=1$  and  $u_+=e^{i(2\nu_0-16\gamma-8\beta)}$.  Let $C_1=e^{\xi+i\varphi}$, $C_3=e^{i\tau+\kappa}$, with $\xi$, $\varphi$, $\tau$, $\kappa \in \mathbb{R}$, then $C_2=-\frac{1}{Z^2}e^{\xi+i(2\gamma-\varphi)}$.\\
Substituting  above   data into  formulae   (\ref{refpotential}),  we get one kind of  two-soliton solution
\begin{equation}
u(x,t)=e^{2i\nu_-}+2e^{2i\nu_-}\frac{\det\left(\begin{array}{cccc}
	0 & Y_1 & Y_2 & Y_3\\
	B_1 & 1+A_{11} & A_{12} & A_{13}\\
	B_2 & A_{21} & 1+A_{22} & A_{23}\\
	B_3 & A_{31} & A_{32} & 1+A_{33}
	\end{array}\right)}{\det\left(\begin{array}{ccc}
	1+A_{11} & A_{12} & A_{13}\\
	A_{21} & 1+A_{22} & A_{23}\\
	A_{31} & A_{32} & 1+A_{33}
	\end{array}\right)},
\end{equation}
where
\begin{align*}
&\theta(x,t,\zeta_j)=-\frac{1}{4\alpha}(\zeta_j^2-\frac{1}{\zeta_j^2})\left[x+(\frac{1}{2\alpha}(\zeta_j^2+\frac{1}{\zeta_j^2})-4\alpha-\frac{2}{\alpha})t\right],\hspace{0.5cm}j=1,2,3,\\
&c_j(x,t,z)=\frac{C_j}{z^2-\zeta_j^2}e^{-2i\theta(x,t,\zeta_j)},\hspace{0.5cm}j=1,2,3,\\
&B_n=1+2\sum_{j=1}^{3}c_j(\zeta_n^*),\ \ Y_n=-C_n^*e^{2i\theta(x,t,\zeta_n^*)},\hspace{0.5cm}n=1,2,3.\\
&A_{nk}=-\sum_{j=1}^{3}4\zeta_j^2c_j(\zeta_n^*)c_k^*(\zeta_j^*),\hspace{0.5cm}n,k=1,2,3.
\end{align*}
The propagation  feature    of  this two-soliton  wave  are shown in   Figure 8 and Figure 9.
 This kind of two-soliton wave  exhibits   two-peak  localized wave,   there is  small   oscillatory waves  on one peak,
 and  another peak is smooth.  While    propagation pattern    of   the two-soliton  wave  along  x-orientation   and  $t$-orientation   are locally   oscillatory  in the middle,
     and  two  tail  parts   are    stationary and   go    to  a  nonzero  constant.

\begin{figure}[H]
	\centering
	\subfigure[]{
		\includegraphics[width=5.4cm]{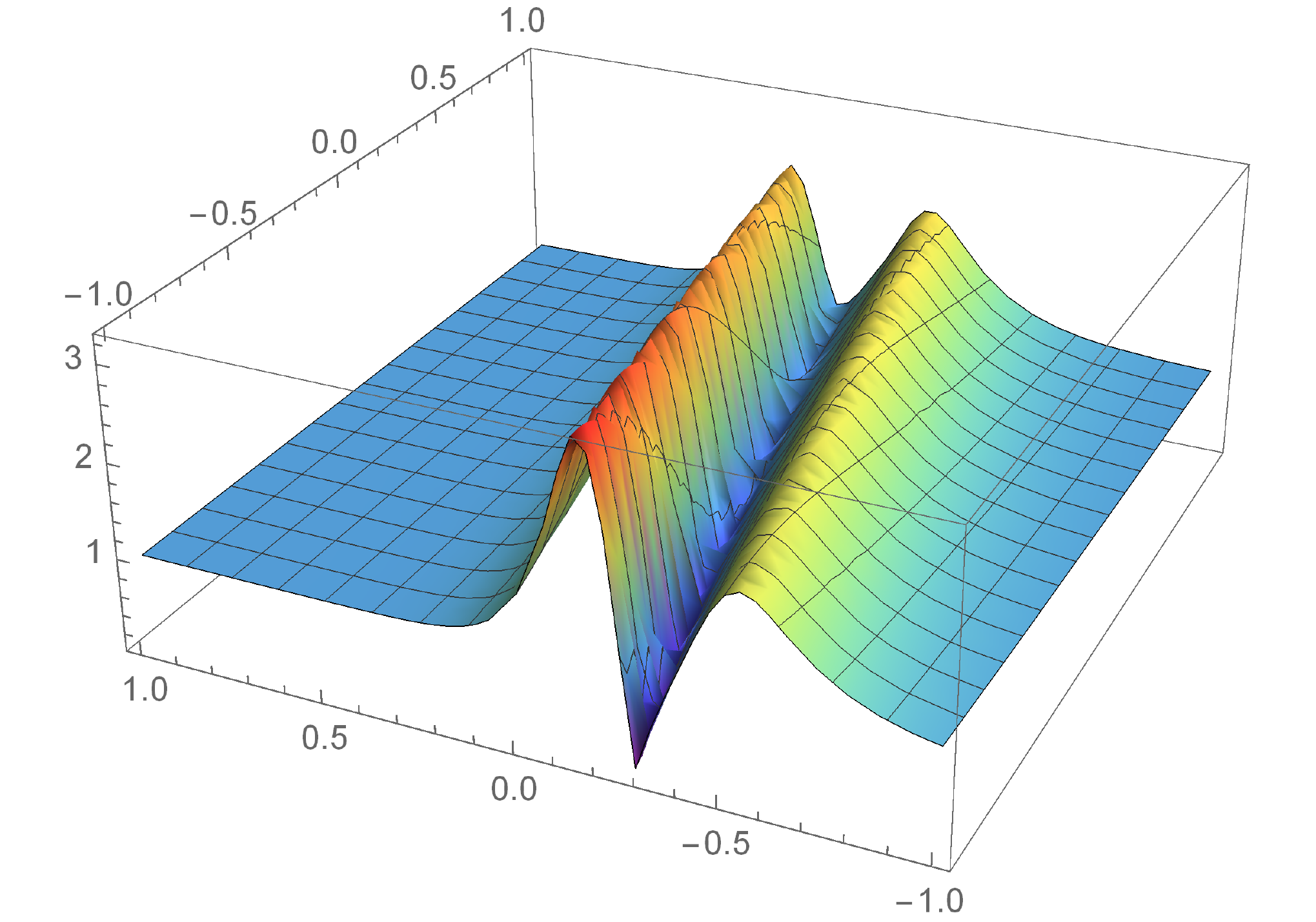}
	}
	\quad
	\subfigure[]{
		\includegraphics[width=0.31\linewidth]{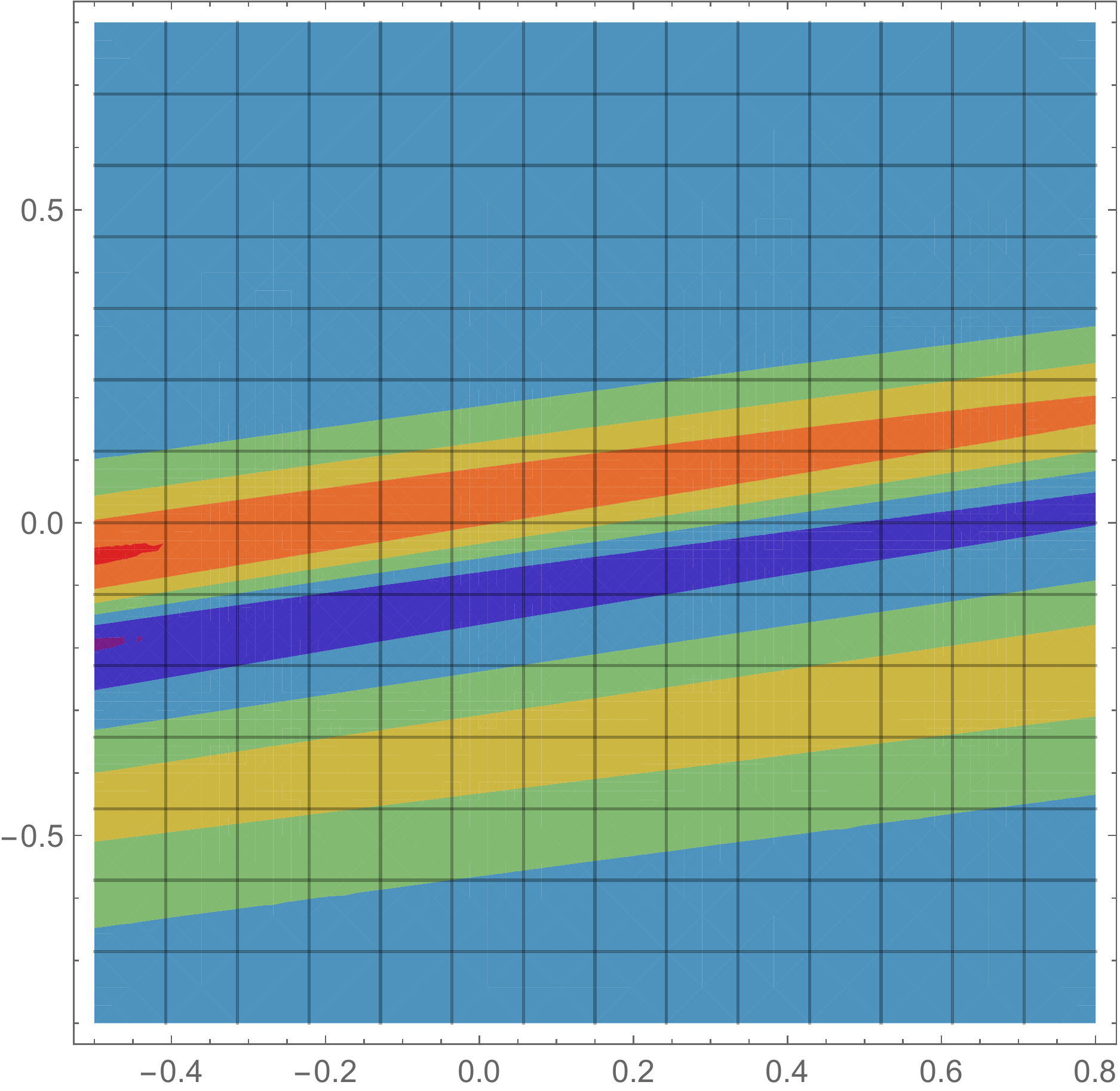}
	}
	\caption{Two-solition solution: (a)  Perspective view of the wave;  (b)  The contour  of the wave.}
\end{figure}
\begin{figure}[H]
	\centering
	\subfigure[]{
		\includegraphics[width=0.41\linewidth]{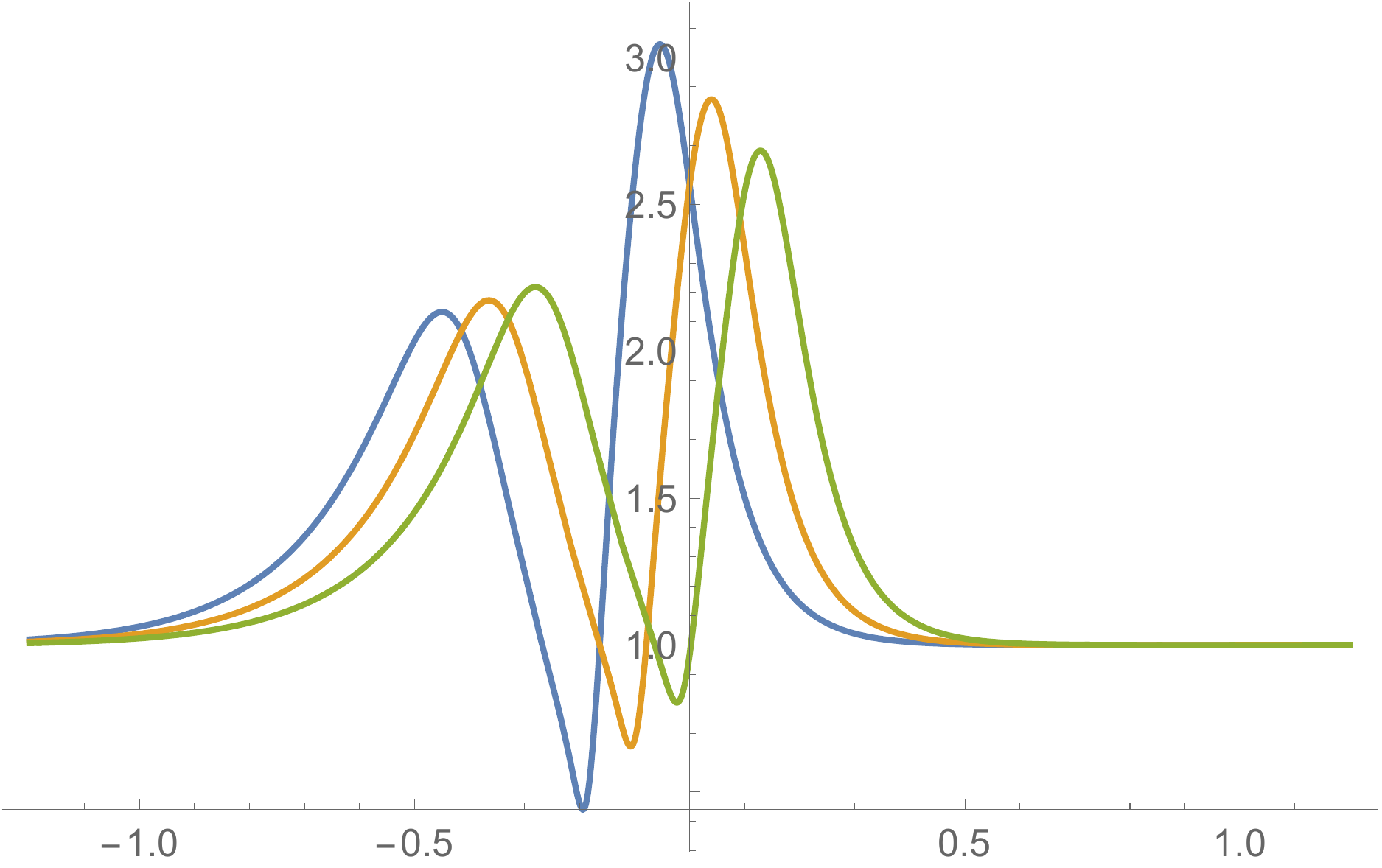}
	}
	\subfigure[]{
		\includegraphics[width=0.41\linewidth]{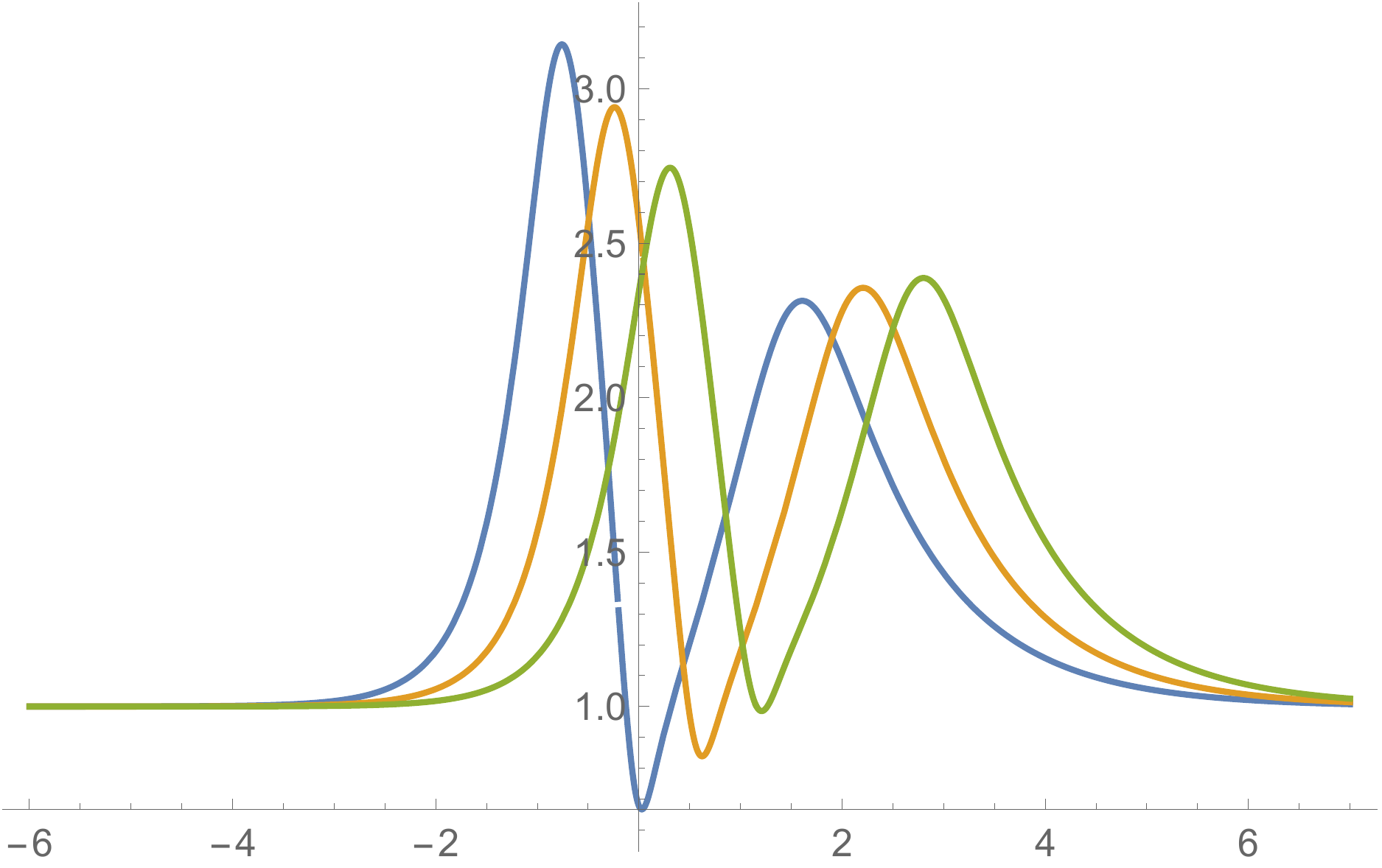}
	}
	\caption{Two-soliton solution: (a) Propagation pattern of the wave along $x$-orientation  with $t=-0.1$ (blue),  $t=0$ (orange),  $t=0.1$ (green); (b) Propagation pattern of the wave along $t$-orientation with $x=-0.5$ (blue),  $x=0$ (orange),  $x=0.5$ (green).}
	\label{fig:3}
\end{figure}

\noindent {\bf Case II.  $N_1=0$, $N_2=2$:}

 Choosing two eigenvalues  $\zeta_j=e^{i\beta_j}$, with $\beta_j\in(\frac{\pi}{2},\pi)$, $j=1,2$,  then    discrete spectrum are
  $\{e^{i\beta_j},-e^{i\beta_j},e^{-i\beta_j},-e^{-i\beta_j}\}_{j=1,2}$. By using theta condition (\ref{thetacondition}), we get
\begin{equation}
\arg(u_-/u_+)=8\beta_1+8\beta_2+-2\nu_0.
\end{equation}
We set $u_-=1$ then $u_+=e^{i(2\nu_0-8\beta_1-8\beta_2)}$. Let $C_j=e^{i\tau_j+\kappa_j}$, with $\tau_j,\kappa_j\in\mathbb{R}$, $j=1,2$. Once again we get the soliton solution with parameters $N_1=0$ and $N_2=2$ by
substituting  above   data into  formulae   (\ref{refpotential}).
\begin{equation}
u(x,t)=e^{2i\nu_-}+2e^{2i\nu_-}\frac{\det\left(\begin{array}{ccc}
	0 & Y_1 & Y_2\\
	B_1 & 1+A_{11} & A_{12}\\
	B_2 & A_{21} & 1+A_{22}
	\end{array}\right)}{\det\left(\begin{array}{cc}
	1+A_{11} & A_{12}\\
	A_{21} & 1+A_{22}
	\end{array}\right)},
\end{equation}
where
\begin{align*}
&\theta(x,t,\zeta_j)=-\frac{1}{4\alpha}(\zeta_j^2-\frac{1}{\zeta_j^2})\left[x+(\frac{1}{2\alpha}(\zeta_j^2+\frac{1}{\zeta_j^2})-4\alpha-\frac{2}{\alpha})t\right],\hspace{0.5cm}j=1,2,\\
&c_j(x,t,z)=\frac{C_j}{z^2-\zeta_j^2}e^{-2i\theta(x,t,\zeta_j)},\hspace{0.5cm}j=1,2,\\
&B_n=1+2\sum_{j=1}^{2}c_j(\zeta_n^*),\ \ Y_n=-C_n^*e^{2i\theta(x,t,\zeta_n^*)},\hspace{0.5cm}n=1,2,\\
&A_{nk}=-\sum_{j=1}^{2}4\zeta_j^2c_j(\zeta_n^*)c_k^*(\zeta_j^*),\hspace{0.5cm}n,k=1,2.
\end{align*}
 The wave propagation  properties  of this two-soliton solution  are shown in   Figure 10 and Figure 11.
 This kind of  two-soliton wave   exhibits    two-peak  localized wave,   there  are no  oscillatory waves  and  smooth   on  both  peaks;
   While propagation of the    wave   along  $x$-orientation   and  $t$-orientation   are locally   oscillatory  in the middle,
     and  two  tail  parts   are    stationary  and   go  to zero.

\begin{figure}[H]
	\centering
	\subfigure[]{
		\includegraphics[width=5.4cm]{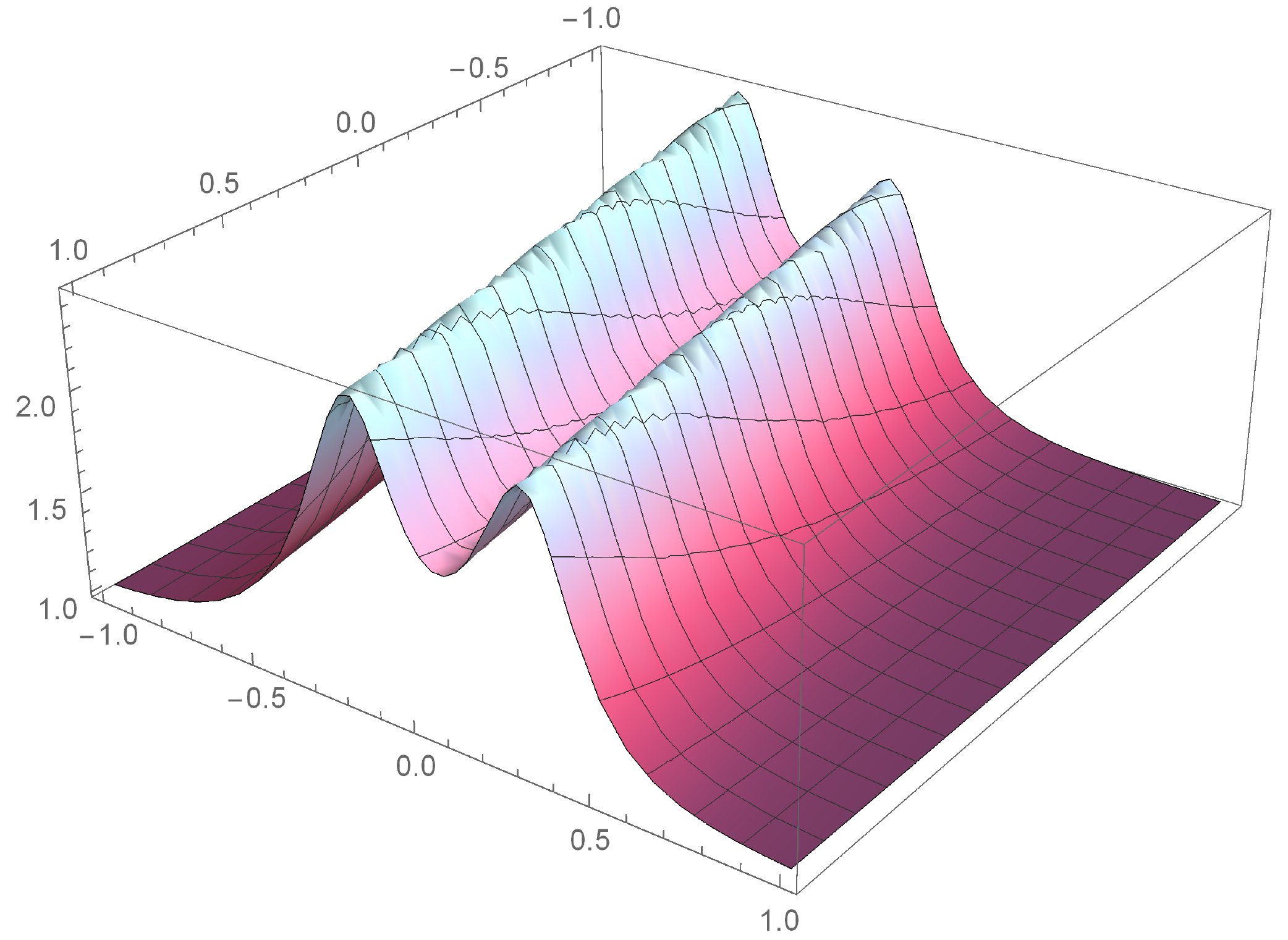}
	}
	\quad
		\subfigure[]{
		\includegraphics[width=0.31\linewidth]{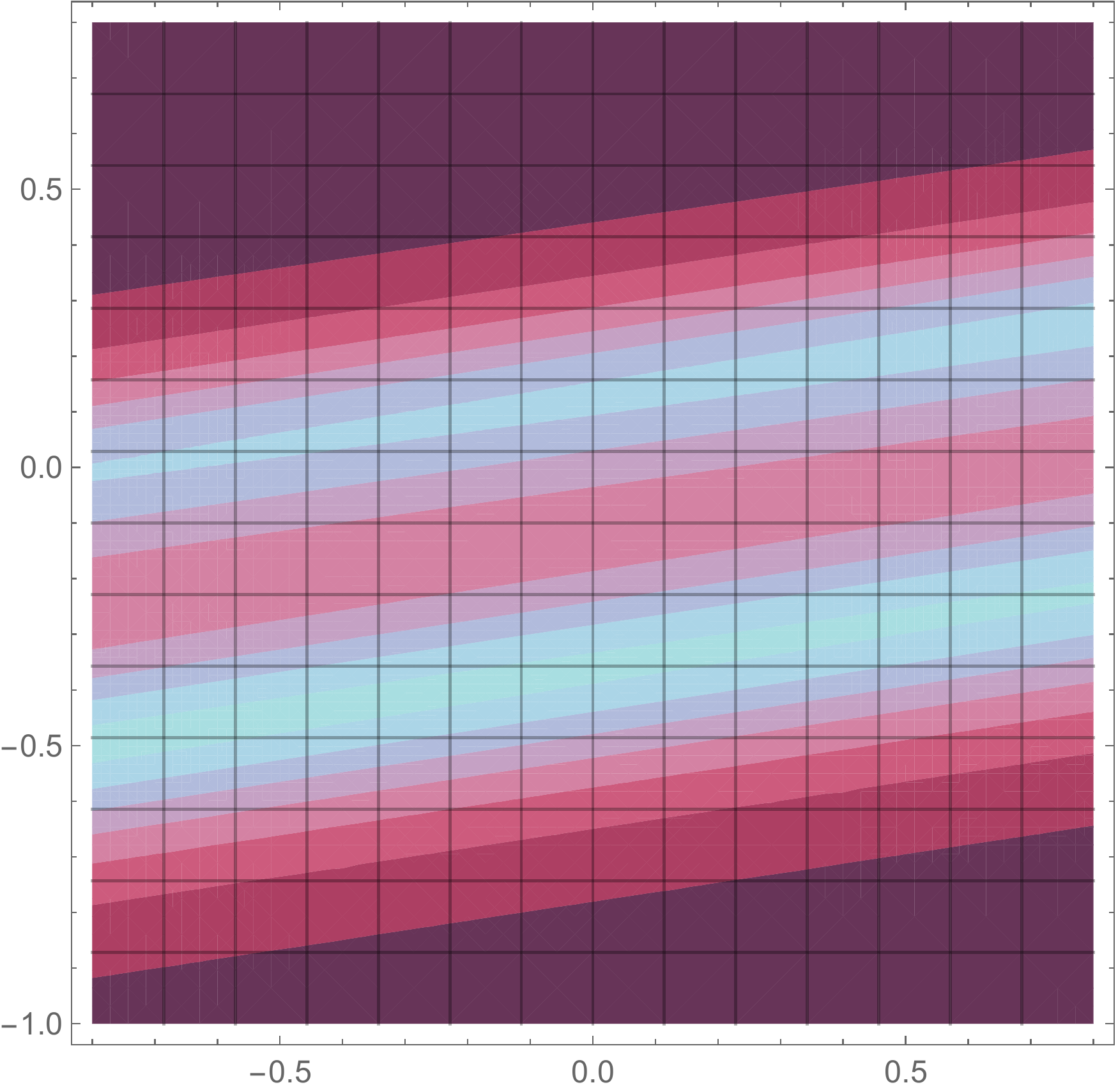}
	}
	\caption{Two-solition solution: (a)  Perspective view of the wave;  (b)  The contour  of the wave.}
\end{figure}
\begin{figure}[H]
	\centering
	\subfigure[]{
		\includegraphics[width=0.41\linewidth]{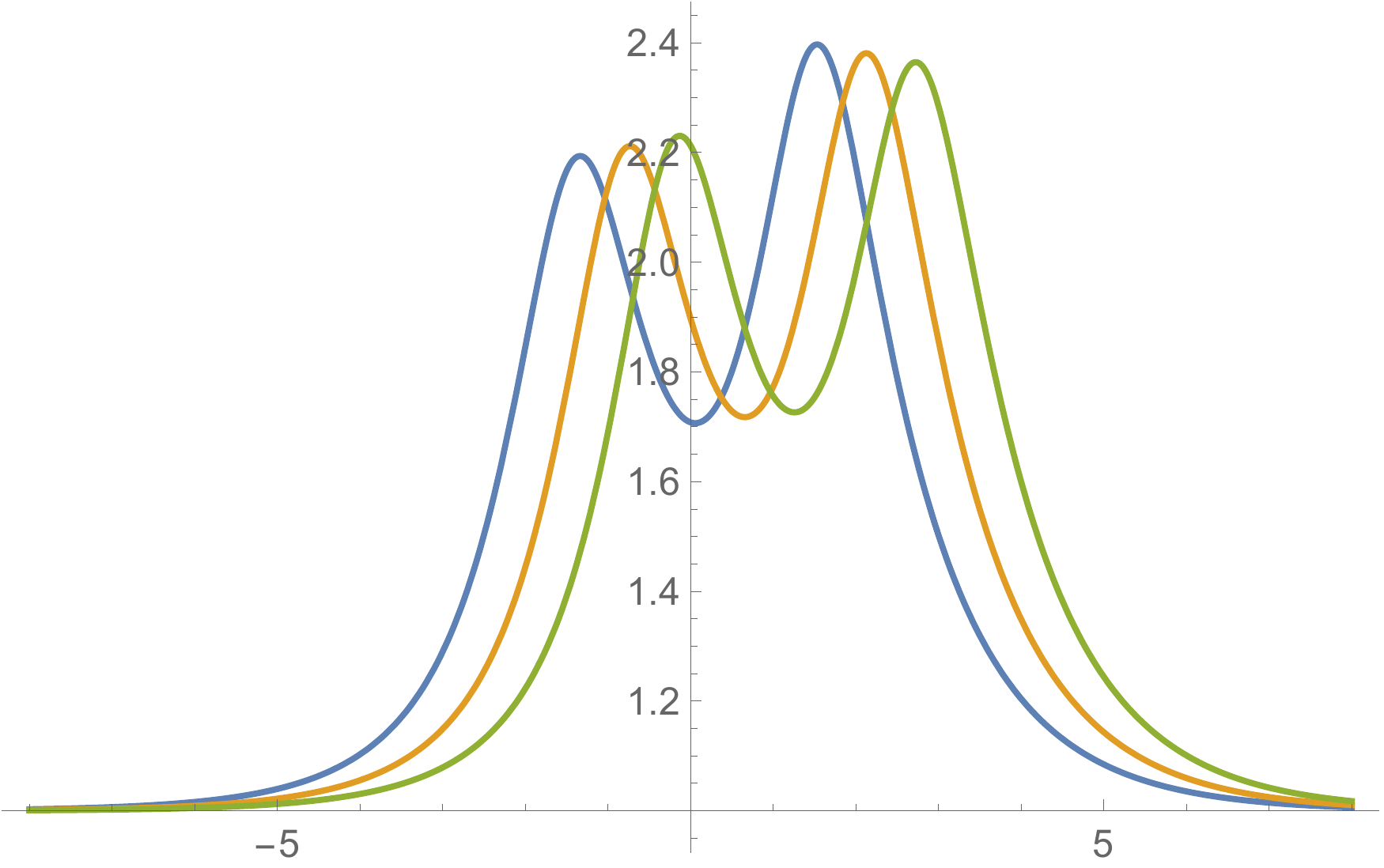}
	}
	\subfigure[]{
		\includegraphics[width=0.41\linewidth]{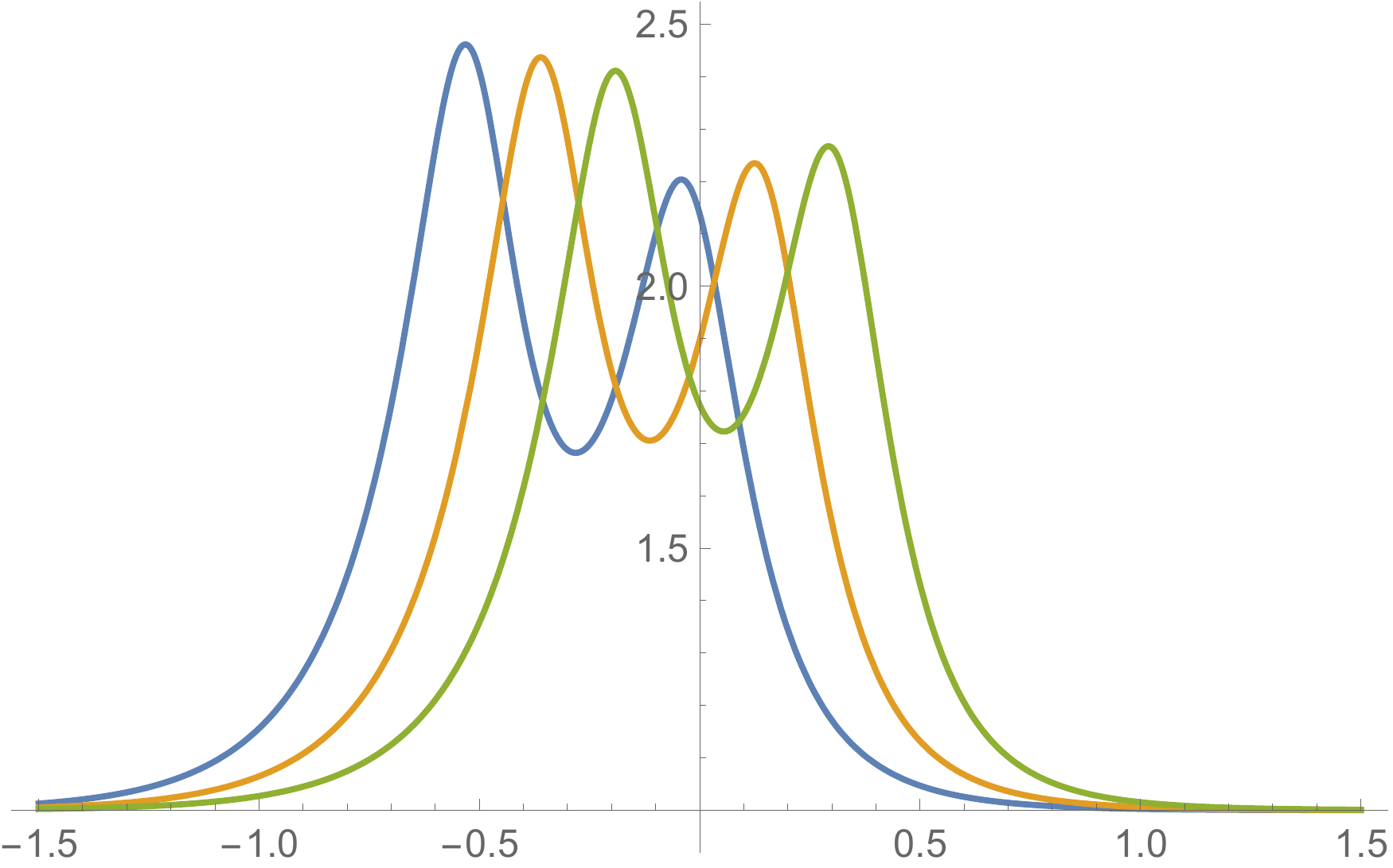}
	}
	\caption{Two-soliton solution: (a) Propagation pattern of the wave along $x$-orientation with $t=-0.2$(blue), $t=0$(orange), $t=0.2$(green); (b)   Propagation pattern  of the wave along $t$-orientation with $x=-1$(blue),  $x=0$(orange),  $x=1$(green).}
	\label{fig:3}
\end{figure}

\noindent {\bf Case II.  $N_1=2$, $N_2=0$:}

Similarly, let $\zeta_j=Z_je^{i\gamma_j}$, with $Z_j>1$ and $\gamma_j \in (\frac{\pi}{2},\pi)$, $j=1,2$, $\zeta_1 \neq \zeta_2$ . Then the other points in discrete spectrum are
\begin{align*}
&-\zeta_1=-Z_1e^{i\gamma_1}, \zeta_1^*=Z_1e^{-i\gamma_1},  -\zeta_1^*=-Z_1e^{-i\gamma_1}, \\
&\zeta_3=-\frac{1}{Z_1}e^{i\gamma_1},  -\zeta_3=\frac{1}{Z_1}e^{i\gamma_1},\zeta_3^*=-\frac{1}{Z_1}e^{-i\gamma_1},  -\zeta_3^*=\frac{1}{Z_1}e^{-i\gamma_1}\\
&-\zeta_2=-Z_2e^{i\gamma_2}, \zeta_2^*=Z_2e^{-i\gamma_2},  -\zeta_2^*=-Z_2e^{-i\gamma_2}, \\
&\zeta_4=-\frac{1}{Z_2}e^{i\gamma_2},  -\zeta_4=\frac{1}{Z_2}e^{i\gamma_2},\zeta_4^*=-\frac{1}{Z_2}e^{-i\gamma_2},  -\zeta_4^*=\frac{1}{Z_2}e^{-i\gamma_2}.
\end{align*}

By using  the theta condition (\ref{thetacondition}), we
have
\begin{equation}
\arg\left( u_-/u_+ \right)+2\nu_0=16\gamma_1+16\gamma_2\nonumber.
\end{equation}
We set $u_-=1$  and  $u_+=e^{i(2\nu_0-16\gamma_1-16\gamma_2)}$.  Let $C_1=e^{\xi_1+i\varphi_1}$, $C_2=e^{\xi_2+i\varphi_2}$, with $\xi_j$, $\varphi_j$ $ \in \mathbb{R}$, $j=1,2$, then $C_3=-\frac{1}{Z_1^2}e^{\xi_1+i(2\gamma_1-\varphi_1)}$, $C_4=-\frac{1}{Z_2^2}e^{\xi_2+i(2\gamma_2-\varphi_2)}$.\\
Substituting  above   data into  formulae   (\ref{refpotential}),  we get the two-soliton solution
\begin{equation}
u(x,t)=e^{2i\nu_-}+2e^{2i\nu_-}\frac{\det\left(\begin{array}{ccccc}
	0 & Y_1 & Y_2 & Y_3 & Y_4\\
	B_1 & 1+A_{11} & A_{12} & A_{13} & A_{14}\\
	B_2 & A_{21} & 1+A_{22} & A_{23} & A_{24}\\
	B_3 & A_{31} & A_{32} & 1+A_{33} & A_{34}\\
	B_4 & A_{41} & A_{42} & A_{43} & 1+A_{44}
	\end{array}\right)}{\det\left(\begin{array}{cccc}
	1+A_{11} & A_{12} & A_{13} & A_{14}\\
	A_{21} & 1+A_{22} & A_{23} & A_{24}\\
	A_{31} & A_{32} & 1+A_{33} & A_{34}\\
	 A_{41} & A_{42} & A_{43} & 1+A_{44}
	\end{array}\right)}. \label{pp}
\end{equation}
where
\begin{align*}
&\theta(x,t,\zeta_j)=-\frac{1}{4\alpha}(\zeta_j^2-\frac{1}{\zeta_j^2})\left[x+(\frac{1}{2\alpha}(\zeta_j^2+\frac{1}{\zeta_j^2})-4\alpha-\frac{2}{\alpha})t\right],\hspace{0.5cm}j=1,2,3,4,\\
&c_j(x,t,z)=\frac{C_j}{z^2-\zeta_j^2}e^{-2i\theta(x,t,\zeta_j)},\hspace{0.5cm}j=1,2,3,4,\\
&B_n=1+2\sum_{j=1}^{4}c_j(\zeta_n^*),\hspace{0.5cm}n=1,2,3,4,\\
&A_{nk}=-\sum_{j=1}^{4}4\zeta_j^2c_j(\zeta_n^*)c_k^*(\zeta_j^*),\hspace{0.5cm}n,k=1,2,3,4,\\
&Y_n=-C_n^*e^{2i\theta(x,t,\zeta_n^*)},\hspace{0.5cm}n=1,2,3,4.
\end{align*}
In this case, the wave propagation  properties  of two-soliton solution (\ref{pp})  are shown in   Figure 11 and Figure 12.
 This kind of  two-soliton wave   exhibits    two-peak  localized wave,   there  are no  oscillatory waves  and  smooth   on  both  peaks;
   While propagation of the    wave   along  $x$-orientation   and  $t$-orientation   are locally   stable   in the middle,
     and  two  tail  parts      go  to a  nonzero constant.

Of course the  $N$-soliton solution expressions  (\ref{refpotential})   are not limited to one-soliton  and two-soliton solutions above, and
it allows us to obtain explicit solutions with an arbitrary number of soliton solutions for the modified NLS equation.  Finally,  we should remak   that
 in the limit $u_0\rightarrow 0$,  both kinds of one-soliton and two-soliton solutions  reduce to the same,    soliton solutions  of the modified  NLS equation with
 zero boundary condition \cite{Doktorov}. \\
\begin{figure}[H]
	\centering
	\subfigure[]{
		\includegraphics[width=5.4cm]{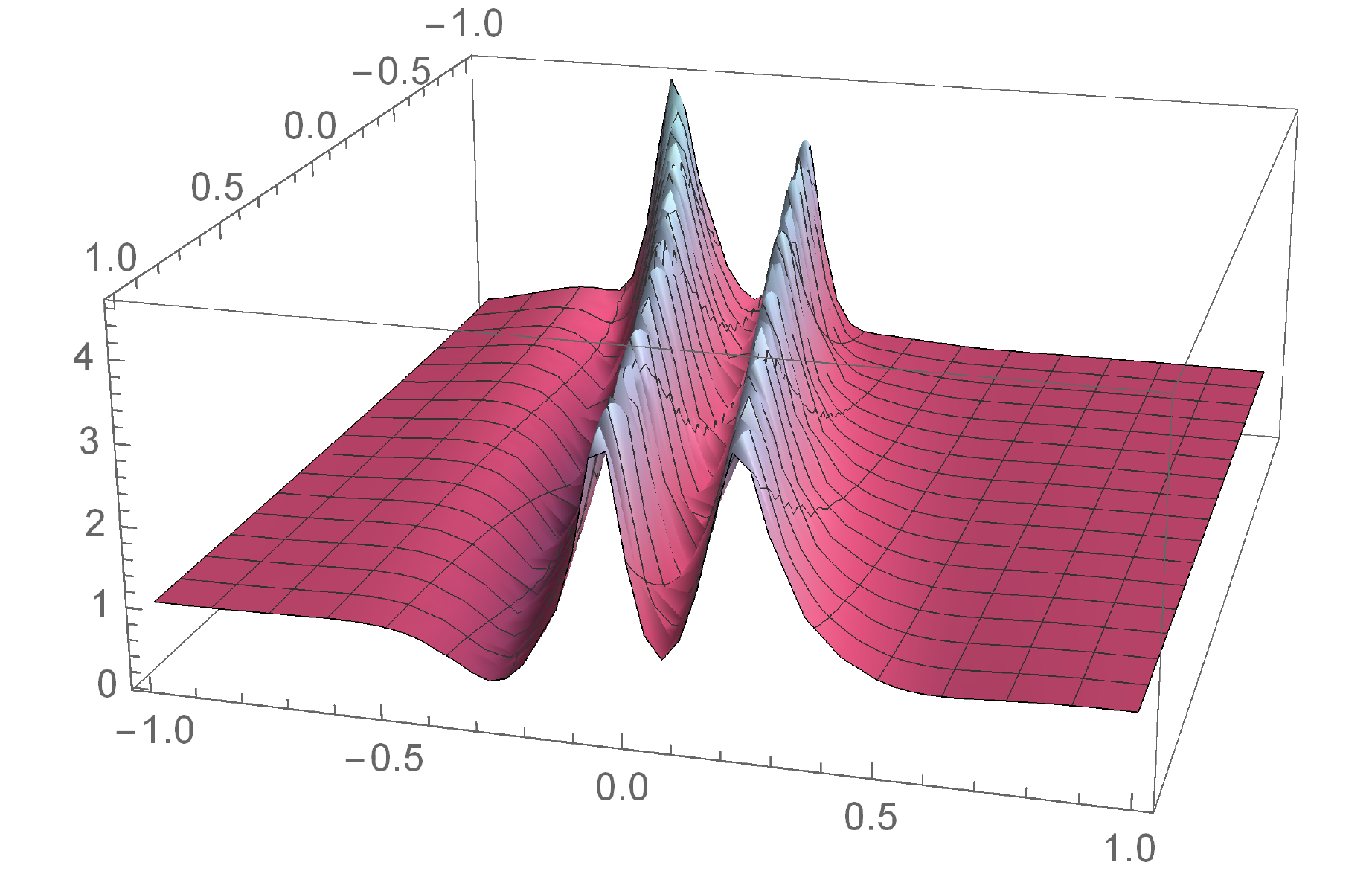}
	}
	\quad
	\subfigure[]{
		\includegraphics[width=0.31\linewidth]{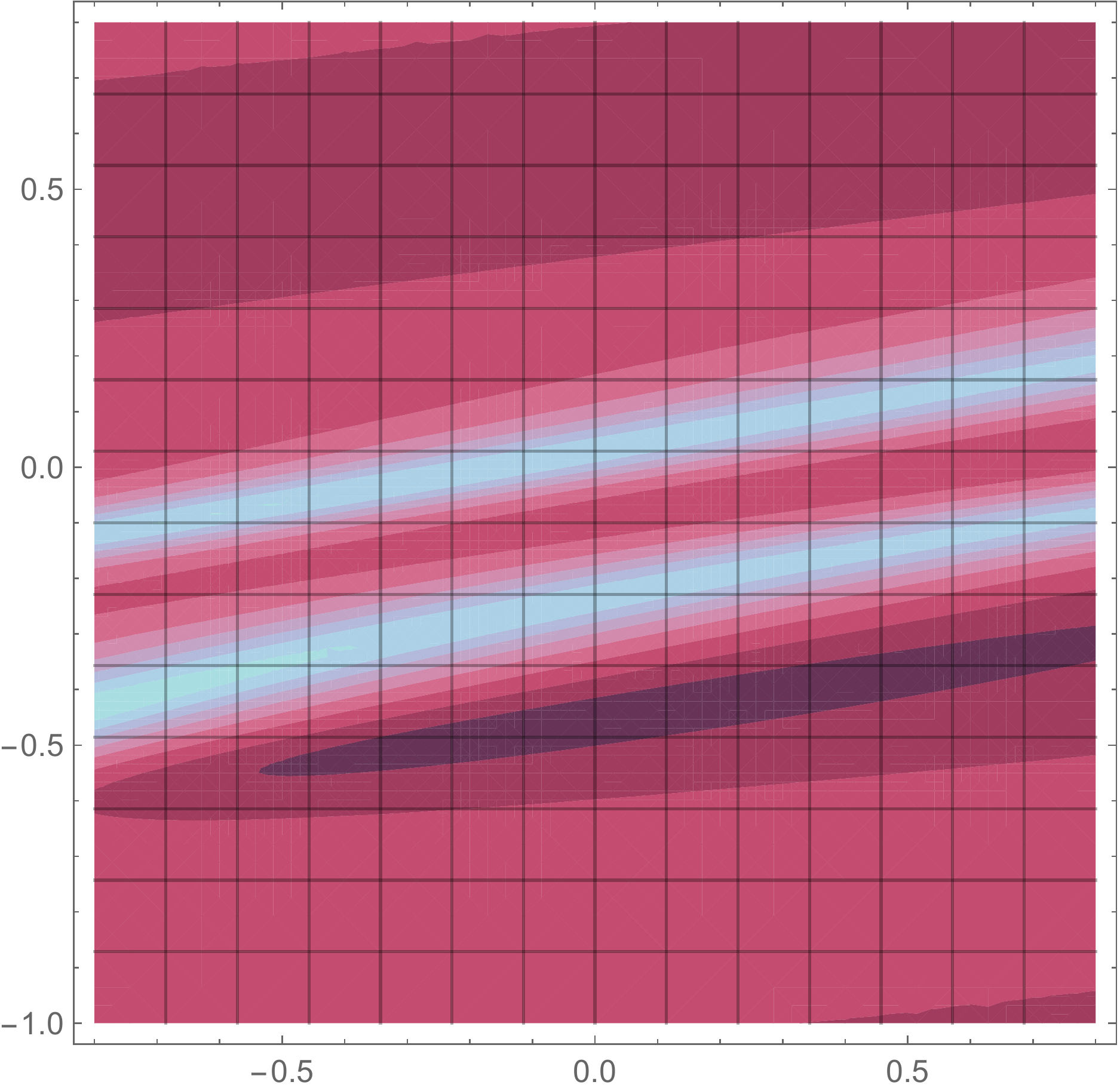}
	}
	\caption{Two-solition solution: (a)  Perspective view of the wave;  (b)  The contour  of the wave.}
\end{figure}
\begin{figure}[H]
	\centering
	\subfigure[]{
		\includegraphics[width=0.41\linewidth]{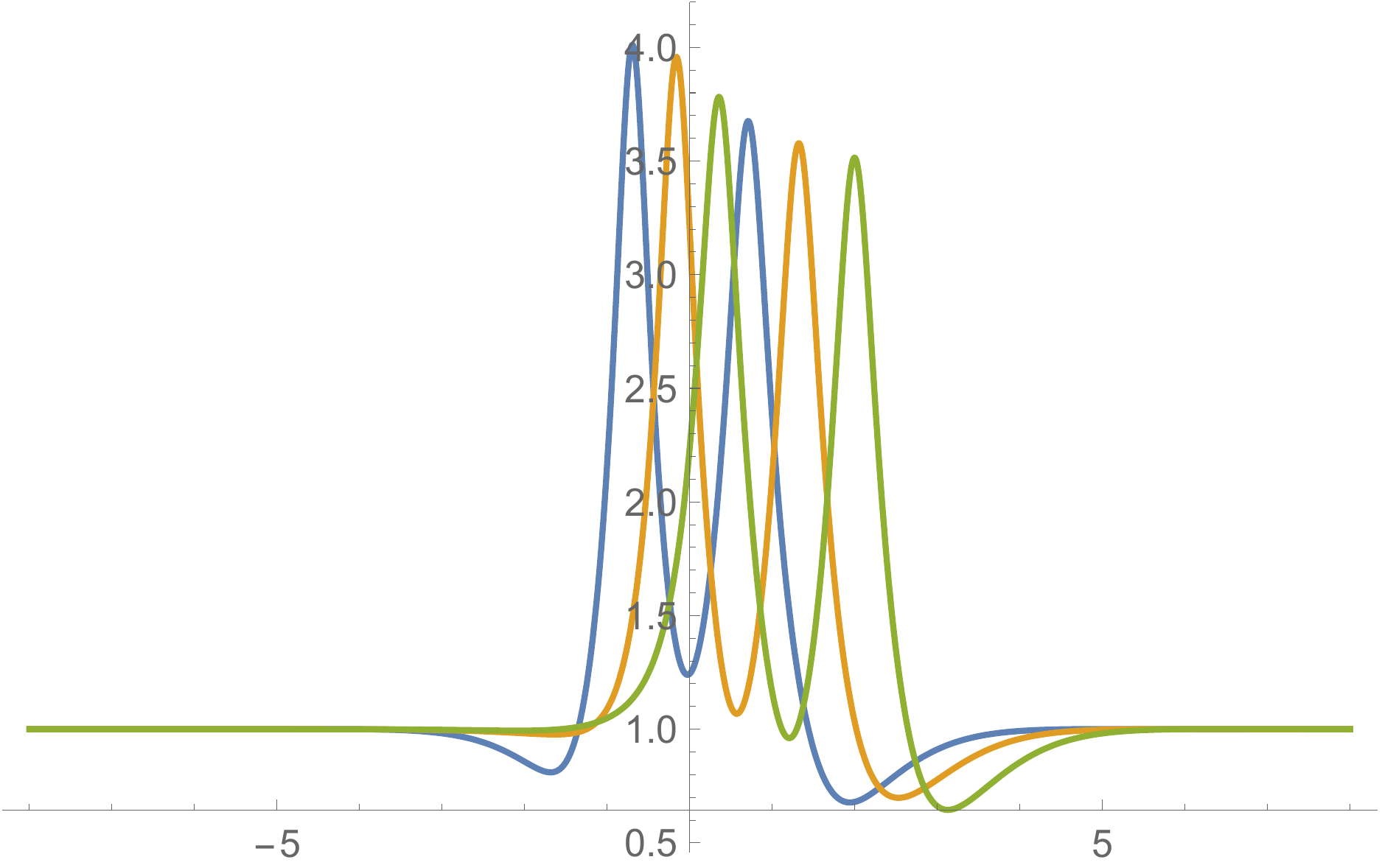}
	}
	\subfigure[]{
		\includegraphics[width=0.41\linewidth]{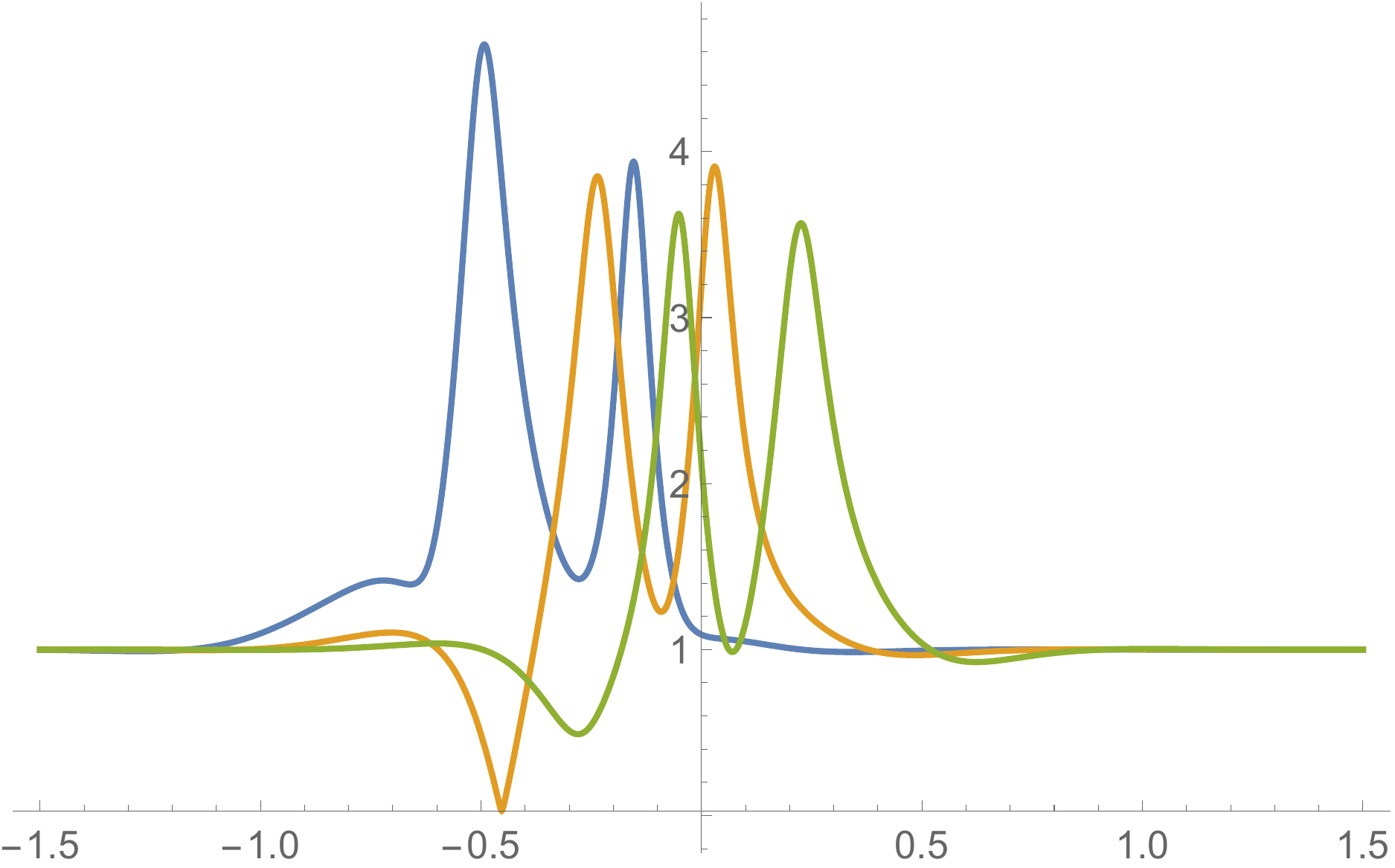}
	}
	\caption{Two-soliton solution: (a) Propagation pattern of the wave along $x$-orientation with $t=-0.2$(blue), $t=0$(orange), $t=0.2$(green); (b)   Propagation pattern  of the wave along $t$-orientation with $x=-1$(blue),  $x=0$(orange),  $x=1$(green).}
	\label{fig:3}
\end{figure}

\noindent\textbf{Acknowledgements}

This work is supported by  the National Science
Foundation of China (Grant No. 11671095,  51879045).

\hspace*{\parindent}
\\

\end{document}